\numberwithin{equation}{section}
\def\E{{\mathbb  E}}
\def\Pr{{\mathbb P}}
\def\eqdef{\triangleq}
\def\ind{{\bf 1}}
\def\A{\mathcal{A}}
\def\T{\mathcal{T}}
\newtheorem{theorem}{Theorem}
\newtheorem{proposition}{Proposition}
\newtheorem{corollary}{Corollary}
\newtheorem{remark}{Remark}
\newtheorem{definition}{Definition}
\def\blfootnote{\gdef\@thefnmark{}\@footnotetext}
\begin{document}

\title{Instability of Sharing Systems in the Presence of Retransmissions}

\author{Predrag R. Jelenkovi\'c  \hspace{0.5cm}    Evangelia D. Skiani \\
\begin{tabular} {c}
\small Department of Electrical Engineering \\
\small Columbia University, New York, NY 10027 \\
\small \{predrag, valia\}@ee.columbia.edu \\
\end{tabular} }

\date{\today}

\maketitle

\begin{abstract}
Retransmissions represent a primary failure recovery mechanism on all layers of communication network architecture. Similarly, fair sharing, e.g. processor sharing (PS), is a widely accepted approach to resource allocation among multiple users. Recent work has shown that retransmissions in failure-prone, e.g. wireless ad hoc, networks can cause heavy tails and long delays. In this paper, we discover a new phenomenon showing that PS-based scheduling induces complete instability with zero throughput in the presence of retransmissions, regardless of how low the traffic load may be. This phenomenon occurs even when the job sizes are bounded/fragmented, e.g. deterministic. Our analytical results are further validated via simulation experiments. Moreover, our work demonstrates that scheduling one job at a time, such as first-come-first-serve, achieves stability and should be preferred in these systems.
\vspace{2mm}

\noindent \textbf{Keywords}: retransmissions $\cdot$ restarts $\cdot$ resource sharing $\cdot$ instabilities $\cdot$ processor sharing $\cdot$ FCFS  $\cdot$ GI/G/1 queue
\end{abstract}

\blfootnote{This work is supported by NSF Grant number 0915784.}
\blfootnote{Preliminary version of this paper has appeared earlier in SIGMETRICS'14 \cite{JS2014}.}

\section{Introduction}\label{s:intro}
High variability and frequent failures characterize the majority of large-scale systems, e.g. infrastructure-less wireless networks, cloud/parallel computing systems, etc. The nature of these systems imposes the employment of failure recovery mechanisms to guarantee their good performance. One of the most straightforward and widely used recovery mechanism is to simply restart all the interrupted jobs from the beginning after a failure occurs. In communication systems, restart mechanisms lie at the core of the network architecture where retransmissions are used on all protocol layers to guarantee data delivery in the presence of channel failures, e.g. Automatic Repeat reQuest (ARQ) protocol \cite{BG92}, contention based ALOHA type protocols in the medium access control (MAC) layer, end-to-end acknowledgements in the transport layer, HTTP downloading scheme in the application layer, and others.

Furthermore, sharing is a primary approach to fair scheduling and efficient management of the available resources. Fair allocation of the network resources among different users can be highly beneficial for increasing throughput and utilization. For instance, CDMA is a multiple access method used in communication networks, where several users can transmit information simultaneously over a single channel via sharing the available bandwidth. Another example is Processor Sharing (PS) scheduling \cite{Ya07} where the capacity is equally shared between multiple classes of customers. In \emph{Generalized} PS (GPS) \cite{PG93,PG94}, service allocation is done according to some fixed weights. The related \emph{Discriminatory} PS (DPS) \cite{AEAK06,FMI80,KL67} is used in computing to model the Weighted Round Robin (WRR) scheduling, while it is also used in communications, as a flow level model of heterogenous TCP connections. Similarly, fair queuing (FQ) is a scheduling algorithm where the link capacity is fairly shared among active network flows; in weighted fair queuing (WFQ), which is the discretized version of GPS, different scheduling priorities are assigned to each flow.

In general, PS-based scheduling disciplines have been widely used in modeling computer and communication networks. Early investigations of PS queues were motivated by applications in multiuser computer systems \cite{Cof70}. The M/G/1 PS queue has been studied extensively in the literature \cite{Ya92}. In the case of the M/M/1 PS system, the conditional Laplace transform of the waiting time was derived in \cite{Cof70}. The importance of scheduling in the presence of heavy tails was first recognized in \cite{Ana99}, and later, in \cite{JM03}, the M/G/1 PS queue was studied assuming subexponential job sizes; see also \cite{JM03} for additional references.  

In \cite{WB12}, it was proven that, although there are policies known to optimize the sojourn time tail under a large class of heavy-tailed job sizes (e.g. PS and SRPT) and there are policies known to optimize the sojourn time tail in the case of light-tailed job sizes, e.g. FCFS, no policies are known to optimize the sojourn time tail across both light and heavy-tailed job size distributions. Indeed, such policies must ``learn" the job size distribution in order to optimize the sojourn time tail. In the heavy-tailed scenarios, any scheduling policy that assigns the server exclusively to a very large job, e.g. FCFS, may induce long delays, in which case, sharing guarantees better performance.

In this paper, we study the effects of sharing on the system performance when restarts are employed in the presence of failures. We revisit the well-studied M/G/1 queue with failures and restarts and focus on the PS scheduling policy. We use the following generic model, which was first introduced in \cite{FS05} in the application context of computing. The system dynamics is described as a process $(A, \{A_n\}_{n\geq1})$, where $A_n$ correspond to the periods when the system is available. $(A, \{A_n\}_{n\geq1})$ is a sequence of i.i.d random variables, independent of the job sizes. In each period of time that the system is available, say $A_n$, we attempt to execute a job of random size $B$. If $A_n > B$, we say that the job is successfully completed; otherwise, we restart the job from the beginning in the following period $A_{n+1}$ when the channel is available.

With regard to retransmissions, it was first recognized in \cite{FS05,SL06} that restart mechanisms may result in heavy-tailed (power law) delays even if the job sizes and failure rates are light-tailed. In \cite{PJ07RETRANS}, it was shown that the power law delays arise whenever the hazard functions of the data and failure distributions are proportional. In the practically important case of bounded data units, a uniform characterization of the entire body of the retransmission distribution was derived in \cite{JS2012,JS2012L}, which allows for determining the optimal size of data units/fragments in order to alleviate the power law effect. Later, these results were extended to the case where the channel is highly correlated \cite{JS2013}, i.e. switches between states with different characteristics, and was proved that the delays are insensitive to the channel correlations and are determined by the `best' channel state.

In this paper, our main contributions are the following. First, we prove that the M/G/1 PS queue is always unstable, regardless of how light the load is and how small the job sizes may be, see Theorems~\ref{thm:0} and~\ref{thm:0-b} in Section~\ref{s:0}. This is a new phenomenon, since, contrary to the conventional belief, sharing the service even between very small deterministic jobs can render the system completely unstable when retransmissions/restarts are employed. This instability is strong, in the sense of system having zero throughput. The intuition is the following. If a large number of jobs arrives in a short period of time, then under the elongated service time distribution induced by sharing, coupled with retransmissions, the queue will keep accumulating jobs that will equally share the capacity, which further exacerbates the problem. Every time a failure occurs, the system resets and the service requirement for each job elongates as the queue size increases. The expected delay until the system clears becomes increasingly long and, consequently, the queue will continue to grow leading to instability. This result also applies to the \emph{Discriminatory} PS (DPS) queue, where the service is not shared equally but according to some fixed weights. Next, we remove the Poisson assumption and extend our results to general renewal arrivals in Section~\ref{s:1}. This demonstrates that instability arises from the interplay between sharing and retransmission/restart mechanisms, rather than any specific characteristics of the arrival process and/or service distribution.

We would also like to emphasize that job fragmentation cannot stabilize the system regardless of how small the fragments are made, since Theorem~\ref{thm:0-b} shows instability for any minimum job size $\beta >0$. Similarly, the system cannot be stabilized by checkpointing regardless of how small the intervals between successive checkpoints are chosen. In our experimental results, we make an interesting observation on the system behavior before it saturates. Initially, during the transient period, the queue appears as if it were stable and one would have difficulty predicting the forthcoming instability. Although it may occasionally accumulate a substantial number of jobs, it returns to zero and starts afresh like a stable queue. However, there exists a time when the queue reaches a critical size after which the service rate of the jobs reduces so much that neither of them can depart. Hence, as the queue continues to increase in size, the system becomes unstable. 

Next, in order to gain further insight into the system, we focus on the transient behavior and study the properties of the completion time of a finite number of jobs with no future arrivals. Specifically, we compare two work-conserving policies: scheduling one job at a time, e.g. FCFS, and PS. Overall, we discover that serving one job at a time exhibits uniformly better performance than PS; compare Theorems~\ref{thm:2} and \ref{thm:3}, respectively. Furthermore, under more technical assumptions, and for light-tailed job/failure distributions, we show that PS performs distinctly worse compared to the heavy-tailed ones, and that PS is always unstable.

From an engineering perspective, our results indicate that traditional approaches in existing systems may be inadequate in the presence of failures. This new phenomenon demonstrates the need of revisiting existing techniques to large-scale failure-prone systems, where PS-based scheduling may perform poorly. For example, since PS is unstable even for deterministic jobs, packet fragmentation, which is widely used in communications, cannot alleviate instabilities. Indeed, fragmentation can only postpone the time when the instability occurs, but cannot eliminate the phenomenon; see Example~1 in Section~\ref{s:5}. Therefore, serving one job at a time, e.g. FCFS, is highly advisable in such systems.

The paper is organized as follows. In Section \ref{s:intro}, we introduce the model along with the necessary definitions and notation. Next, in Section~\ref{s:0}, we present our main results on the M/G/1 queue, which are further generalized in Section~\ref{s:1}. Later, in Section~\ref{s:3}, we analyze the transient behavior of the system under two different scheduling policies, e.g. serving one job at a time and PS. Last, Section~\ref{s:5} presents our simulation experiments that validate our main theoretical findings, while Section~\ref{s:6} concludes the paper.

\section{Definitions and Notation}\label{s:intro} 
First, we provide the necessary definitions and notation assuming that the jobs are served individually. Consider a generic job of random size $B$ requesting service in a failure-prone system. Without loss of generality, we assume that the system is of unit capacity. Its dynamics is described as a process $(A, \{A_n\}_{n\geq1})$ of availability periods, where at the end of each period $A_n$, the system experiences a failure, as shown in Figure~\ref{fig:tc}. 

 \begin{figure}[h]
\centering
\begin{picture}(220,80)(0,0)
\put(0,0){\includegraphics[scale = 1.7]{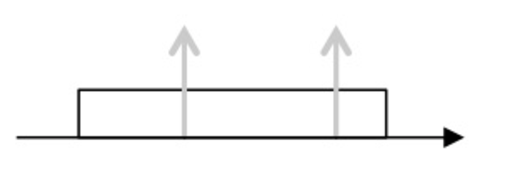}}
\put(60,25){\small $A_1$ }
\put(120,25){\small $A_2$}
\put(170,25){\small $A_3$}
\put(220,5){$t$}
\end{picture}
  \caption{System with failures.}
\label{fig:tc}
\end{figure}

At each period of time that the system becomes available, say $A_n$, we attempt to process a generic job of size $B$. If $A_n > B$, we say that the job is completed successfully; otherwise, we wait until the next period $A_{n+1}$ when the channel is available and restart the job. A sketch of the model depicting the system is drawn in Figure \ref{fig:tc2}.
 \begin{figure}[h]
\centering
\begin{picture}(220,70)(0,0)
\put(0,0){\includegraphics[scale = 0.4, trim = 10mm 70mm 30mm 50mm, clip ]{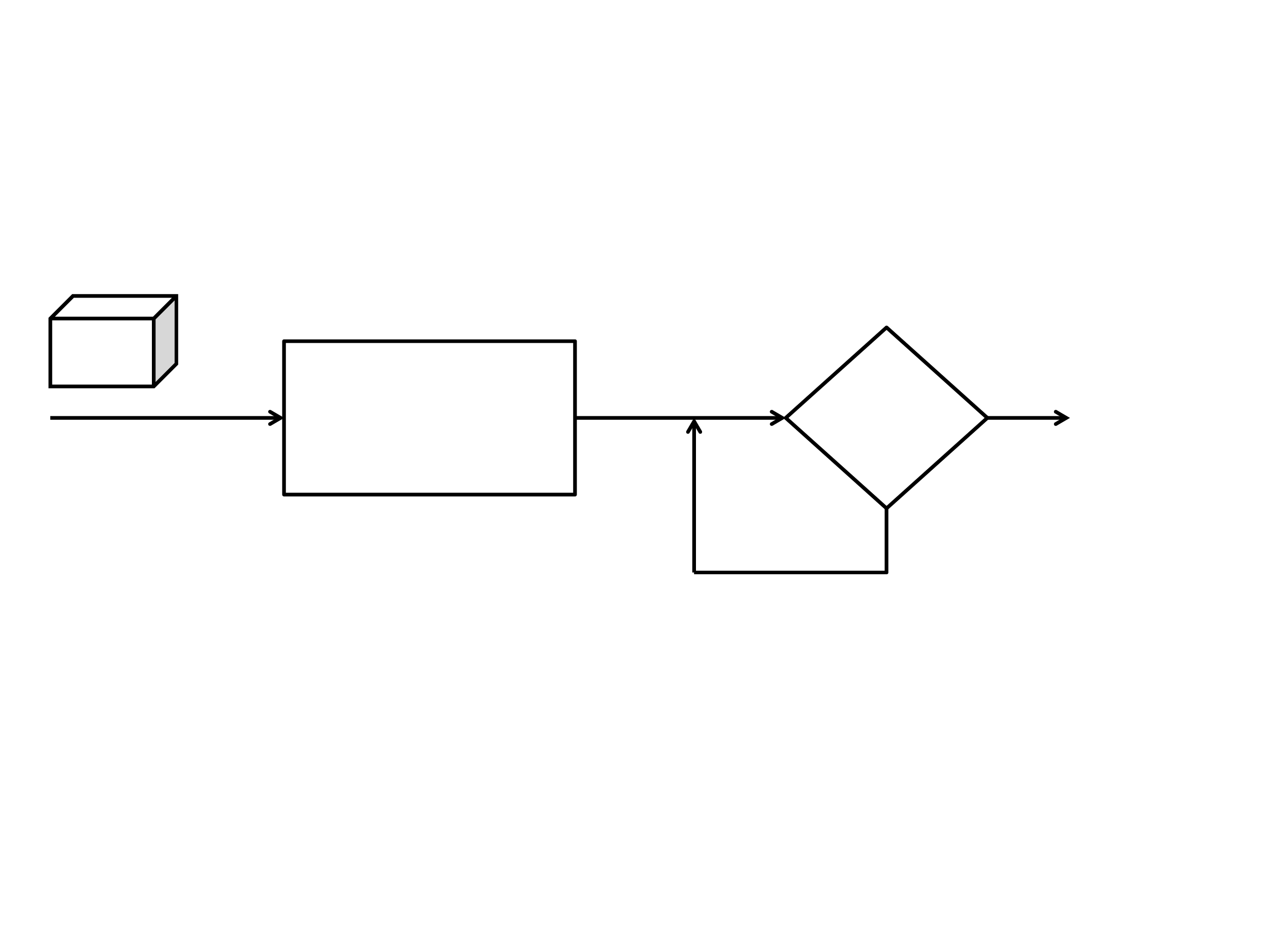}}
\put(9,55){\small $B$}
\put(60,50){\small Failure-prone }
\put(72,40){\small channel }
\put(74,30){\small $\{A_n\}$ }
\put(175,41){\small $A_n > B$}
\put(154,10){\small resend   $\qquad$      no}
\end{picture}
  \caption{Jobs executed in system with failures.}
\label{fig:tc2}
\end{figure}

\begin{definition}
The number of restarts for a generic job of size $B$ is defined as
\begin{displaymath}
N \eqdef \inf \{  n: A_n > B\}.
\end{displaymath}
\end{definition}

We are interested in computing the total service time $S$ until $B$ is successfully completed, which is formally defined as follows. 

\begin{definition}\label{def:S}
The service time is the total time until a generic job of size $B$ is successfully completed and is denoted as
\begin{equation*} \label{eq:S}
S \eqdef \sum_{i=1}^{N-1} A_i + B.
\end{equation*}
\end{definition}

\noindent We  denote the complementary cumulative distribution functions for $A$ and
 $B$, respectively, as
  \begin{equation*}
 \bar G (x) \eqdef \Pr (A>x) \qquad \text{and} \qquad \bar{F}(x)\eqdef \Pr(B>x).
   \end{equation*}
Throughout the paper, we assume that the functions $\bar G(x)$ and $\bar F(x)$ are absolutely continuous for all $x \geq 0$. We also use the following standard notation. For any two real functions $f(x)$ and $g(x)$ and fixed $x_0 \in \mathbb{R} \cup \{ \infty\}$, we say $f(x) \sim g(x)$ as $ x \rightarrow x_0$, to denote~$\lim_{ x \rightarrow x_0} f(x)/g(x) = 1$. 

\section{M/G/1 queue with restarts}\label{s:0}
In this section, we discuss the stability of the M/G/1 queue under two scheduling disciplines: Processor Sharing (PS) and First Come First Serve (FCFS). Throughout the paper, we assume that the arrival rate is positive, $\lambda >0$, unless otherwise indicated. In the following subsection, we show in Theorem~\ref{thm:0-b} that the M/G/1 PS queue is unstable. Next, in subsection~\ref{ss:0-1}, we derive the necessary and sufficient condition for the system to be stable when the jobs are processed according to FCFS.   

\subsection{Instability of Processor Sharing Queue}\label{ss:0-2}
In this section, we show in Theorems~\ref{thm:0} and \ref{thm:0-b} that the M/G/1 PS queue is unstable when jobs need to restart after failures. First, in Proposition~\ref{prop:1}, we show that for some initial condition on the queue size, the probability that no job completes service approaches 1, under the mild assumption that jobs are bounded from below by some positive constant $\beta$. This is a natural assumption for communication or computing applications where jobs, e.g. files, packets, threads, must have a header to contain the required information, such as destination address, thread id, etc. Hence, the job sizes, in practice, cannot be smaller than a positive constant. 

Next, in Theorem~\ref{thm:0}, without any initial condition on the queue size, we prove that after some finite time, no job ever leaves the system; this result is stronger than standard stability theorems since it implies zero throughput. Then, in Corollary~\ref{cor:0}, we draw the weaker conclusion that the queue size grows to infinity, which is also stated in Theorem~\ref{thm:0-b}. Nevertheless, the latter does not require the assumption on the minimum job size.

We begin with the following proposition.
\begin{proposition}\label{prop:1}
Assume that at time $t=0$, a failure occurs and there are $Q_0 \geq k$ jobs in the M/G/1 PS queue. If $\E A < \infty$ and $\Pr [B \geq \beta] = 1, \beta >0$, then there exists $\theta>0$, such that for all $k \geq 1$
\begin{equation} \label{eq:prop1}
\Pr[\text{no job ever completes service}] \geq 1 - O(\E A \ind (A \geq \beta k ) + e^{-\theta k}).
\end{equation}
\end{proposition}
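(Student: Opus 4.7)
The plan is to upper bound $\Pr[\text{some job ever completes}]$ via a union bound over availability periods, exploiting the fact that the queue is monotonically increasing absent completions. The key observation is a necessary condition: if $T_n$ denotes the queue size at the beginning of period $A_n$, processor sharing gives each job a service rate of at most $1/T_n$, so the maximum accumulated service within $A_n$ is $A_n/T_n$. Since $B \geq \beta$ almost surely, any completion inside $A_n$ forces $A_n \geq \beta T_n$. On the event of no prior completions, $T_n = k + \Lambda_1 + \cdots + \Lambda_{n-1}$, where $\Lambda_i$ counts the Poisson arrivals during $A_i$ (i.i.d.\ with mean $\mu := \lambda \E A > 0$), and $A_n$ is independent of $T_n$. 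The union bound thus yields
\begin{equation*}
\Pr[\text{some completion}] \leq \sum_{n \geq 1} \Pr[A_n \geq \beta T_n] = \sum_{n \geq 1} \E\bigl[\bar G(\beta T_n)\bigr].
\end{equation*}

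To evaluate this sum I would split at $n_0 := \lceil k \rceil$. For the first $n_0$ terms, I use the trivial bound $\bar G(\beta T_n) \leq \bar G(\beta k)$ combined with the Markov-type estimate $\bar G(\beta k) \leq \E[A \ind(A \geq \beta k)]/(\beta k)$, giving a total contribution of $n_0 \bar G(\beta k) = O(\E[A \ind(A \geq \beta k)])$. For $n > n_0$, I dichotomize on the concentration event $G_n := \{T_n \geq k + (n-1)\mu/2\}$: on $G_n$ one has $\bar G(\beta T_n) \leq \bar G(\beta(k + (n-1)\mu/2))$, and the integral test yields
\begin{equation*}
\sum_{n > n_0} \bar G\bigl(\beta(k + (n-1)\mu/2)\bigr) \leq \frac{2}{\beta \mu}\int_{\beta k}^\infty \bar G(u)\, du \leq \frac{2}{\beta \mu}\, \E[A \ind(A \geq \beta k)].
\end{equation*}

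The $e^{-\theta k}$ contribution arises from the complementary event $G_n^c = \{M_{n-1} < (n-1)\mu/2\}$ for $n > n_0$, where $M_{n-1} = \sum_{i < n} \Lambda_i$. Since $M_{n-1}$ equals the count of a Poisson process over the cumulative time $S_{n-1} = \sum_{i < n} A_i$, and the Poisson process admits exponential Chernoff tails in continuous time, one obtains $\Pr[G_n^c] \leq e^{-\theta(n-1)}$ once $S_{n-1}$ is of order $(n-1)\E A$, and summing over $n > n_0 \sim k$ yields $O(e^{-\theta k})$. The main obstacle is securing this exponential control of $T_n$ without an MGF assumption on $A$: since the mixed-Poisson variables $\Lambda_i$ need not themselves be light-tailed, one cannot apply Cram\'er directly to $M_{n-1}$. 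The resolution is to condition on $S_{n-1}$ being of the expected order (applying Chernoff to the Poisson counting process in continuous time), and to separately bound the rare atypical event $\{S_{n-1} \ll n \E A\}$, whose contribution is absorbed into the $e^{-\theta k}$ term thanks to the choice $n_0 \sim k$.
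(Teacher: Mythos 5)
Your proposal is correct but takes a genuinely different route from the paper. The paper lower-bounds $\Pr[\text{no job ever completes}]$ directly by a ``doubling'' construction: it builds nested events $\underline{E}_1,\underline{E}_2,\dots$ in which, at stage $i$, all $2^{i-1}ck$ availability periods are shorter than $2^{i-1}\beta k$ and at least $2^{i-1}k$ new jobs arrive, so the queue grows from $2^{i-1}k$ to $2^i k$ with no departures; multiplying the probabilities of these independent stages yields the stated bound. You instead upper-bound the complement $\Pr[\text{some job completes}]$ by a single union bound over availability periods, using the clean necessary condition $A_n \geq \beta T_n$, where $T_n$ is the (monotonically growing) queue size at the start of period $n$ on the no-departure event. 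This cleanly separates the two error terms by source: the $\E A\ind(A\geq\beta k)$ term comes from $\bar G$ evaluated at large arguments (via Markov for small $n$ and an integral test for large $n$), and the $e^{-\theta k}$ term comes from the deviation event $\{T_n < k + (n-1)\mu/2\}$ for $n > n_0\sim k$. Your concern about ``applying Cram\'er to $M_{n-1}$'' is actually not an obstacle: you need only the lower tail of the mixed-Poisson $M_{n-1}$, and the Laplace transform $\E[e^{-\theta\Lambda_i}]$ always exists, so a direct Chernoff bound works; nevertheless, your workaround (conditioning on $S_{n-1}$ being of the expected order and controlling $\Pr[S_{n-1}\ll n\E A]$ via Chernoff on $\E A - A_i$, which is bounded above) is valid and is essentially the same device the paper uses to control $\Pr(T_1 < 2k/\lambda)$. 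Both routes deliver exactly the bound \eqref{eq:prop1}; your union-bound decomposition is arguably more transparent about where each error term comes from, while the paper's staged construction adapts more readily to the GI/G/1 extension in Section~\ref{s:1}.
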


\begin{proof}
Let $T_1 = \sum_{i=1}^{c k} A_i$ be the cumulative time that includes the first $ck$ failures; to simplify notation we write $\sum_{x}^y $ to denote $\sum_{\lceil x \rceil}^{\lfloor y \rfloor} $, where $\lceil x \rceil$ is the smallest integer $\geq x$ and $\lfloor y \rfloor$ is the largest integer $\leq y$. Now, define the event $\mathcal{A}_1 \equiv \mathcal{A}_1(k) \eqdef \{ A_1 < \beta  k, A_2 <  \beta  k, \dots, A_{ck} < \beta  k \}$. On this event, no job can leave the system since $Q_0 \geq k$ and all of them are at least of size $\beta$. Thus, if they were served in isolation, they could not have completed service in the first $ck$ attempts.

Now, let $E_1$ denote the event that there is no departure in the first $ck$ attempts and there are at least $k$ arrivals in $(0, T_1]$; we use $Z_{(t_0, t_1]}$ to denote the number of Poisson arrivals in the interval $(t_0, t_1]$, whereas we simply write $Z_t$ for intervals $(0,t]$. Formally, 
\begin{equation*}
E_1 \supset \underline{E}_1 \eqdef \{ Z_{T_1} \geq k, \mathcal{A}_1 \},
\end{equation*}
on the set $\{ Q_0 \geq k\}$. Now, observe that
\begin{align*}
 \Pr( \underline{E}_1) & \geq \Pr(Z_{T_1} \geq k, T_1 \geq 2 k/ \lambda ,  \mathcal{A}_1) \\
& \geq  \Pr(Z_{ 2k/ \lambda} \geq k, T_1 \geq 2 k/ \lambda ,  \mathcal{A}_1) \\
& \geq \Pr(Z_{ 2k/ \lambda} \geq k ) \Pr(  T_1 \geq 2 k/ \lambda ,  \mathcal{A}_1), \\
\intertext{since Poisson arrivals are independent of the failure process. Thus,}
\Pr( \underline{E}_1) & \geq  \Pr(Z_{ 2k/ \lambda} \geq k ) \left( \Pr( \mathcal{A}_1 ) - \Pr( T_1 < 2k/ \lambda) \right).
\end{align*}

First, note that 
\begin{align*}
 \Pr(Z_{ 2k/ \lambda} \geq k ) &  = 1 - \Pr(Z_{ 2k/ \lambda} < k )   = 1 - \Pr( 2k - Z_{ 2k/ \lambda} > k )  \\
 & \geq 1 -  e^{-\theta k} \E e^{\theta (2k - Z_{ 2k/ \lambda})} = 1 - e^{\theta k } \E e^{-\theta Z_{ 2k/ \lambda}},
 \intertext{by Cramer's bound for $\theta >0$. Next, observe that $ Z_{ 2k/ \lambda}$ is Poisson with rate $ 2 k $ and thus}
\Pr(Z_{ 2k/ \lambda} \geq k )  & \geq 1- e^{\theta k } e^{2(e^{-\theta}-1)k} = 1 - e^{-\theta_1 k},
\end{align*}
where $\theta_1 = 2(1-e^{-\theta})- \theta  >0 $, for $\theta$ small.

Second, observe that
 \begin{align*}
\Pr( T_1 < 2k/ \lambda)  &  = \Pr \left(\sum_{i=1}^{ck} A_i < 2k/ \lambda \right) = \Pr \left(\sum_{i=1}^{ck} (A_i - \E A) < 2k/ \lambda - ck \E A \right) \\
  &  \leq \Pr \left(\sum_{i=1}^{3k/ \lambda \E A} ( \E A - A_i) > k/ \lambda  \right),
 \intertext{by picking $c \eqdef  3 / (\lambda \E A) $. Now, let $X_i  \eqdef \E A - A_i$, which are bounded from above since $X_i \leq \E A < \infty$, from our main assumption. Therefore, Cramer's large deviation bound implies that}
 \Pr( T_1 < 2k/ \lambda)  &  \leq \Pr \left(\sum_{i=1}^{3k/ \lambda \E A} X_i  > k/ \lambda  \right) \leq H_2 e^{-\theta_2 k},
\end{align*}
for some $H_2, \theta_2 >0$.

Therefore, 
 \begin{align*}
\Pr( \underline{E}_1)  &  \geq (1- e^{-\theta_1 k})  \left( \Pr( \mathcal{A}_1 ) - H_2 e^{-\theta_2 k} \right) \\
& \geq \Pr( A < \beta k  )^{ck} - ( e^{-\theta_1 k} + H_2 e^{-\theta_2 k} - H_2 e^{- (\theta_1 + \theta_2)k}) \\
& \geq (1- \Pr(A \geq \beta k))^{ck} - H e^{-\theta k},
\intertext{where $\theta = \min(\theta_1, \theta_2)$ and $H>0$ such that $H < (1+H_2) $. Next, using $1-x \geq e^{-2 x}$ for small $x$, we have for all $k \geq k_0$}
\Pr( \underline{E}_1)  & \geq  e^{- 2 ck \Pr(A \geq \beta k)}  - H e^{-\theta k} \\
& \geq  1 - 2 ck \Pr(A \geq \beta k)  - H e^{-\theta k} \\
& \geq  e^{- 4 ck \Pr(A \geq \beta k)  - 2 H e^{-\theta k} }.
\end{align*}

Next, at time $\T_1 = T_1$, on event $\underline{E}_1$, the queue has at least $2k$ jobs, e.g. $Q_{\T_1} \geq 2k$, and no jobs have departed. Similarly as before, let $T_2 = \sum_{i=ck+1}^{3c k} A_i$ be the cumulative time that includes the next $2ck$ failures, and define $\mathcal{A}_2 \equiv \mathcal{A}_2(k) = \{ A_{ck +1} <2 \beta  k, A_{ck+ 2} < 2 \beta  k, \dots, A_{3ck} < 2 \beta  k \}$. The probability that no job departs in $(0, \T_2]$, where $\T_2 = T_1 + T_2$, is lower bounded by 
\begin{align}\label{eq:E1E2}
\nonumber \Pr(\text{no job departs in} (0,\T_2]) & \geq \Pr( Z_{T_1} \geq k, \A_1, Q_{\T_1} \geq 2k, Z_{(\T_1,\T_2]} \geq 2k, \A_2)\\
& \geq \Pr( Z_{T_1} \geq k, \A_1, Z_{T_2} \geq 2k, \A_2), 
\end{align}
since $\{ Q_{\T_1} \geq 2k \} \supseteq \{Z_{T_1} \geq k, \A_1 \}$ on the set $\{ Q_0 \geq k \}$. 

Now, if $E_2$ is the event that there is no departure in the next $2ck$ attempts and there are at least $2k$ arrivals in $(\T_1, \T_2]$, then $E_2 \supset \underline{E}_2 \eqdef \{ Z_{T_2} \geq 2 k, \mathcal{A}_2 \} $; note that $ \underline{E}_2$ is independent of $\underline{E}_1$. Via identical arguments as before, we obtain
\begin{align*}
\Pr( \underline{E}_2) & \geq \Pr(Z_{T_2} \geq 2k, T_2 \geq 4 k/ \lambda ,  \mathcal{A}_2) \\
& \geq \Pr(Z_{4k/ \lambda} \geq 2 k ) \left( \Pr( \mathcal{A}_2 ) - \Pr( T_2 < 4k/ \lambda) \right)\geq  e^{- 8 ck \Pr(A \geq 2 \beta k)  - 2 H e^{-2\theta  k} }.
\end{align*}
Therefore, at time $\T_2$, on event $\underline{E}_1 \cap \underline{E}_2$, there are at least $4k$ jobs.

In general, for any $n$, we can extend the reasoning from \eqref{eq:E1E2} to obtain
\begin{align*}
\Pr(\text{no job departs in} (0,\T_n]) & \geq \Pr( Z_{T_1} \geq k, \A_1, Z_{T_2} \geq 2k, \A_2, \dots, Z_{T_n} \geq 2^{n-1}k, \A_n)\\
& =  \Pr(\underline{E}_1   \cap \underline{E}_2 \cap \dots \cap \underline{E}_n),
\end{align*}
where $\underline{E}_n = \{ Z_{T_n} \geq 2^{n-1}k, \A_n \}$ and $T_n = \sum_{i= (2^{n-1}-1)ck+1}^{(2^{n}-1) c k} A_i$. Similarly, 
\begin{displaymath}
\Pr( \underline{E}_n)  \geq e^{- 2^{n+1} ck \Pr(A \geq 2^{n-1} \beta k)  - 2 H e^{-\theta 2^{n-1} k} }.
\end{displaymath}

Hence, we obtain
\begin{align*}
  \Pr(E_1  \cap E_2 \cap \dots \cap E_n) & \geq  \Pr( \underline{E}_1 \cap \underline{E}_2 \cap \dots \cap \underline{E}_n)  =  \Pr(\underline{E}_1) \Pr(\underline{E}_2) \cdots \Pr(\underline{E}_n),\\
   \intertext{since the events $\underline{E}_i$'s are independent. Thus,}
\Pr(E_1  \cap E_2 \cap \dots \cap E_n)   & \geq \prod_{i=1}^{n} e^{- 2^{i+1} ck \Pr(A \geq 2^{i-1} \beta k)  - 2 H e^{-2^{i-1} \theta k} }\\
& = e^{ -4 \sum_{i=0}^{n-1} 2^{i} c k \Pr(A \geq 2^{i} \beta k) - 2 H \sum_{i=0}^{n-1} e^{-2^i \theta k}}\\
& \geq e^{ -4 \sum_{i=0}^{\infty} 2^{i} c k \Pr(A \geq 2^{i} \beta k) - 2 H e^{-\theta k}\sum_{i=0}^{\infty} e^{-(2^i-1) \theta k}}. 
\intertext{Now, observe that $\sum_{i=0}^{\infty} e^{-(2^i-1) \theta k} < \infty$, and thus we can pick $H$ such that}
  \Pr(E_1  \cap E_2 \cap \dots \cap E_n)  & \geq  e^{ -4 \sum_{i=0}^{\infty} 2^{i} c k \Pr(A \geq 2^{i} \beta k) -  H e^{-\theta k}} .
\end{align*}

Furthermore, we observe that
\begin{align*}
  \sum_{i=0}^{\infty} 2^{i} c k \Pr(A \geq 2^{i} \beta k) &   \leq  \frac{c}{\beta}  \sum_{i=0}^{\infty}  \beta  k \int_{2^{i}}^{2^{i+1}} \Pr(A \geq x \beta k ) dx\\
    & \leq  \frac{c}{\beta} \beta  k   \int_{1}^{\infty} \Pr(A \geq x \beta k ) dx  =  \frac{c}{\beta}   \int_{\beta k }^{\infty} \Pr(A \geq y) dy  =  \frac{c}{\beta}   \E A \ind(A \geq \beta k ).
\end{align*}
and thus
\begin{align*}
  \Pr(E_1  \cap E_2 \cap \dots \cap E_n) &  \geq  e^{ -4 c \beta^{-1} \E A \ind(A \geq \beta k )  -  H e^{-\theta k}}  \geq 1 - H(\E A \ind(A \geq \beta k ) + e^{-\theta k})
  \end{align*}

Last, note that
\begin{align*}
  \Pr(\text{no job ever completes service}) \geq \Pr( \cap_{i=1}^{\infty} E_i) &= \lim_{n \rightarrow \infty}   \Pr(E_1  \cap E_2 \cap \dots \cap E_n) \\
  & \geq  1 - H(\E A \ind(A \geq \beta k ) + e^{-\theta k}), 
\end{align*}
where the first inequality follows by definition and the second equality from monotone convergence.

Last, for $k < k_0$, $\Pr(\text{no job ever completes service}| Q_0 \geq k) \geq 0 \geq 1 - H(\E A \ind(A \geq \beta k_0) + e^{-\theta k_0}) \geq 1 - H(\E A \ind(A \geq \beta k) + e^{-\theta k})$, by picking $H > 1/(\E A \ind(A \geq \beta k_0) + e^{-\theta k_0})$, and thus  \eqref{eq:prop1} holds trivially.  
\end{proof}

We proceed with our main theorem which shows that, after some finite time, no job will ever depart. Here, we can assume that the failure process is in stationarity, or more generally, that the first failure occurs at time $0 \leq A_0 < \infty$ a.s. and that the time between subsequent failures $i$ and $i+1$, is $A_i$, where $\{ A_i \}_{i \geq 1}$ are i.i.d., independent of $A_0$. If $A_0$ is the excess distribution of $A_1$, then the failure process is stationary.

\begin{theorem}\label{thm:0}
In the M/G/1 PS queue, if $\E A < \infty$ and $\Pr[B \geq \beta] = 1, \beta >0$, then
\begin{displaymath}
 \lim_{t \rightarrow \infty} \Pr (\text{no job ever completes service after time $t$}) =1. 
\end{displaymath}
\end{theorem}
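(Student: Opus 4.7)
The plan is to leverage Proposition~\ref{prop:1} by showing that, almost surely, the queue size immediately after some failure epoch reaches any prescribed level $k$, and then to let $k \to \infty$. Define the monotone increasing events $D_t \eqdef \{\text{no job completes service after time } t\}$ and $D \eqdef \bigcup_{t \geq 0} D_t$; by monotonicity it suffices to prove $\Pr(D) = 1$.

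If $\Pr(A > \beta) = 0$, then $A \leq \beta$ almost surely and, since every job has size at least $\beta$ and restarts from scratch at each failure, no job can ever complete, so $\Pr(D) = 1$ trivially. In the remaining case $\Pr(A > \beta) > 0$, fix any $\alpha \in (0, 1/2)$ and, letting $\tau_n$ denote the $n$-th failure epoch, consider the ``burst'' event
\begin{equation*}
B_n(k) \eqdef \{\alpha\beta \leq A_{n+1} \leq k\alpha\beta\} \cap \{\text{at least } k \text{ Poisson arrivals fall in } (\tau_n, \tau_n + \alpha\beta]\}.
\end{equation*}
On $B_n(k)$, each of the $\geq k$ new jobs arriving in the first $\alpha\beta$ units of $A_{n+1}$ shares the PS server with at least $k-1$ other new jobs for the remainder of the period, so its accumulated service during $A_{n+1}$ is bounded by $\alpha\beta + (A_{n+1} - \alpha\beta)/k < 2\alpha\beta < \beta$. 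Hence no new job completes during $A_{n+1}$, and the queue just after $\tau_{n+1}$ contains at least the $k$ retained new arrivals.

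For $k$ large enough, $\Pr(B_n(k)) \geq \rho_k > 0$; by the i.i.d.\ structure of $\{A_n\}$ together with the strong Markov property of the Poisson arrival process at the failure epochs, the events $\{B_n(k)\}_{n \geq 1}$ are independent, so the second Borel--Cantelli lemma yields that
\begin{equation*}
\sigma_k \eqdef \inf\{n \geq 1 : \text{the queue just after the $n$-th failure has size} \geq k\}
\end{equation*}
is finite almost surely. Applying Proposition~\ref{prop:1} at the failure epoch $\tau_{\sigma_k}$ via the strong Markov property then gives
\begin{equation*}
\Pr(D) \;\geq\; \Pr(\text{no completion after }\tau_{\sigma_k}) \;\geq\; 1 - H\bigl(\E A\, \ind(A \geq \beta k) + e^{-\theta k}\bigr).
\end{equation*}
Letting $k \to \infty$, the right-hand side converges to $1$ by dominated convergence ($\E A < \infty$ implies $\E[A\,\ind(A \geq \beta k)] \to 0$) together with $e^{-\theta k} \to 0$, so $\Pr(D) = 1$.

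The main technical obstacle lies in the PS bookkeeping on the burst event: verifying precisely that on $B_n(k)$ no new job completes during $A_{n+1}$, even when old jobs may also be present and the $k$ arrivals are spread throughout $(\tau_n, \tau_n + \alpha\beta]$, which requires a careful upper bound on each new job's accumulated PS service. The other subtle point is the mutual independence of the $\{B_n(k)\}$, which rests on the strong Markov property of the Poisson process at the i.i.d.\ failure epochs and the fact that on $B_n(k)$ the arrival window $(\tau_n, \tau_n + \alpha\beta]$ lies inside $(\tau_n, \tau_{n+1}]$, making these windows disjoint across $n$.
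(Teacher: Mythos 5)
Your proof is correct and follows essentially the same strategy as the paper's: locate an almost-surely-finite failure epoch at which at least $k$ jobs are queued, apply Proposition~\ref{prop:1} there via the strong Markov property, and send $k\to\infty$. The only real difference is a cosmetic one in how recurrence of such an epoch is established: the paper looks at windows of length $\beta$ ending at a failure (so that by a discipline-free argument none of the $\geq k$ recent arrivals could yet have completed), whereas you look at windows of length $\alpha\beta$ starting at a failure, with the extra length constraint $A_{n+1}\leq k\alpha\beta$ so that the PS rate bound $1/k$ keeps each new job's accumulated service below $2\alpha\beta<\beta$; both constructions yield disjoint windows and hence independent recurrent events, and you also tidily absorb the degenerate case $\Pr(A>\beta)=0$.
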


\begin{proof}
For any $k \geq 1$, let $T_k$ be the first time that there are $k$ jobs in the queue and a failure occurs. $T_k$ is almost surely finite since it is upper bounded by the time $\bar{T}_k$ that there are at least $k$ arrivals in an open interval of size $\beta$ just before a failure. The probability of this event is $\Pr(Z_\beta \geq k) >0$.  

Let $\mathcal{B} \eqdef \{ B^{T_k}_1, \dots, B^{T_k}_{Q_{T_k}} \}$ denote the job sizes that are present in the queue at time $T_k$. From Proposition~\ref{prop:1}, we have 
\begin{align}\label{eq:thm1-a}
 \Pr(\text{no job leaves after $T_k$} | Q_{T_k}, \mathcal{B} )  & \geq 1 - H(\E A \ind(A \geq \beta k ) + e^{-\theta k}) \geq 1 - \epsilon,
 \end{align}
 for all $k \geq k_0$, since $\theta >0$ and $\E A \ind(A \geq \beta k ) \rightarrow 0$ as $k  \rightarrow \infty$. 

Now, for any fixed time $t$, we obtain
\begin{align*}
 \Pr(\text{no job leaves after time $t$}) & \geq \Pr(T_k \leq t, \text{no job leaves after $T_k$}) \\
 & = \E [  \Pr(T_k \leq t| Q_{T_k},  \mathcal{B} ) \Pr(\text{no job leaves after $T_k$} | Q_{T_k},  \mathcal{B} ) ]\\
 & \geq \Pr(T_k \leq t) (1 - \epsilon),
  \end{align*}
which follows from \eqref{eq:thm1-a}; the equality follows from the fact the event \{no job leaves after $T_k$\} is independent of the past, e.g. $T_k \leq t$, given $Q_{T_k},  \mathcal{B} $. Next, recall that $T_k$ is almost surely finite, i.e. $ \lim_{t \rightarrow \infty} \Pr(T_k \leq t) = 1 $, and thus taking the limit as $t \rightarrow \infty$ yields
\begin{align*}
 \lim_{t \rightarrow \infty}  \Pr(\text{no job leaves after time $t$}) & \geq 1 - \epsilon.
 \end{align*}
Last, letting $\epsilon \downarrow 0$ finishes the proof. 
\end{proof}

\begin{corollary}\label{cor:0}
Under the conditions in Theorem~\ref{thm:0}, we have as $t \uparrow \infty$,
\begin{displaymath}
Q_t  \uparrow  \infty \quad \text{a.s.} 
\end{displaymath}
\end{corollary}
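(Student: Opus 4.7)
The plan is to deduce the corollary as a near-immediate consequence of Theorem~\ref{thm:0}, by upgrading the ``probability tending to $1$'' conclusion to an almost sure statement and then using the Poisson arrivals to pump the queue up.

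First, I introduce the events $E_t \eqdef \{\text{no job completes service after time } t\}$ and observe that these are monotone in $t$: if no job departs after time $t$, then certainly no job departs after any $s \geq t$, so $E_t \subseteq E_s$ for $t \leq s$. Therefore, by continuity of probability along an increasing family,
\begin{equation*}
\Pr\!\left(\bigcup_{t \geq 0} E_t\right) \;=\; \lim_{t \to \infty} \Pr(E_t) \;=\; 1,
\end{equation*}
where the last equality is exactly the statement of Theorem~\ref{thm:0}. Hence there exists an almost surely finite random time $\tau \eqdef \inf\{t \geq 0 : E_t \text{ holds}\}$ after which no job ever departs from the queue.

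Second, on the event $\{\tau < \infty\}$ (which has probability one), the queue size process becomes purely accumulative for $t \geq \tau$: every job present at time $\tau$ stays forever, and every subsequent arrival also stays. Writing $N_{(\tau, t]}$ for the number of Poisson arrivals in the interval $(\tau, t]$, this gives the deterministic lower bound
\begin{equation*}
Q_t \;\geq\; Q_\tau + N_{(\tau, t]} \quad \text{for all } t \geq \tau.
\end{equation*}
Since the arrival process is Poisson with rate $\lambda > 0$ and $\tau$ is a.s. finite, $N_{(\tau, t]} \uparrow \infty$ a.s.\ as $t \uparrow \infty$. Combined with the fact that $Q_t$ is nondecreasing on $[\tau, \infty)$, this yields $Q_t \uparrow \infty$ a.s., which is the claim.

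There is no real obstacle here; the only subtlety is the standard step of turning the convergence-in-probability conclusion of Theorem~\ref{thm:0} into an a.s.\ statement, which is handled cleanly by the monotonicity of the $E_t$'s. Everything else is bookkeeping: once departures cease, Poisson arrivals alone drive $Q_t$ to infinity.
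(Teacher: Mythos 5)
Your proof is correct and follows essentially the same route as the paper: both exploit the monotonicity of the events $\{\text{no job completes after time }t\}$ to upgrade Theorem~\ref{thm:0} to an almost-sure statement, and then use the a.s.\ unboundedness of the Poisson arrival process to drive $Q_t$ to infinity once departures cease. The only cosmetic difference is that you extract an a.s.\ finite random time $\tau$ and argue pathwise from there, while the paper keeps a deterministic cutoff $v$ and shows the inclusion $U_v \subset \{Q_t \uparrow \infty\}$ before taking $v \to \infty$; the substance is identical.
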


\begin{proof}
Note that the number of arrivals $Z_t  \uparrow \infty$ as $t  \uparrow \infty$ a.s. Thus, without loss of generality, we can assume that $Z_t(\omega) \uparrow \infty$ as $t  \uparrow \infty$ for every $\omega$ (by excluding the set of zero probability). Then, for any $v >0$,
\begin{displaymath}
U_v \eqdef \{ \text{no job ever completes service after time $v$} \} \subset \{ Q_t  \uparrow  \infty \text{ as } t \uparrow \infty \}.
\end{displaymath}
Now, if $\omega \in U_v$, then for $t \geq v, Q_t(\omega) $ is non-decreasing. Furthermore, since there are no departures, the rate of increase of $Q_t $ is equal to the arrival rate, and thus $Q_t  \uparrow  \infty $. Hence, 
\begin{align*}
\Pr(Q_t  \uparrow  \infty  \text{ as } t \uparrow \infty) & \geq \Pr(\text{no job ever completes service after time $v$})  \\
\intertext{which, by Theorem~\ref{thm:0}, implies}
 \Pr(Q_t  \uparrow  \infty \text{ as } t \uparrow \infty) &= \lim_{v \rightarrow \infty} \Pr(\text{no job ever completes service after time $v$})  = 1.
 \end{align*} 
\end{proof}
\begin{remark}
Note that Theorem~\ref{thm:0} and Corollary~\ref{cor:0} are stronger than standard stability theorems, since they also imply that eventually no job ever leaves the system.
\end{remark}

Finally, we show instability, in general, without the condition $\Pr[B \geq \beta] = 1$. However, the conclusion is slightly weaker than in Theorem~\ref{thm:0}, and is the same as in Corollary~\ref{cor:0}. Basically, one cannot guarantee that no job ever completes service, since jobs can be arbitrarily small. 

\begin{theorem}\label{thm:0-b}
In the M/G/1 PS queue, if $\E A < \infty$ and $B >0$ a.s., we have as $t \uparrow \infty$,
\begin{displaymath}
Q_t  \uparrow  \infty \quad \text{a.s.} 
\end{displaymath}
\end{theorem}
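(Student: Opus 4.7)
The plan is to reduce Theorem~\ref{thm:0-b} to an analog of Proposition~\ref{prop:1} and Theorem~\ref{thm:0} applied to the subpopulation of \emph{large} jobs. Since $B > 0$ almost surely, $\lim_{\beta \downarrow 0}\Pr(B \geq \beta) = 1$, so we may fix some $\beta_0 > 0$ with $p \eqdef \Pr(B \geq \beta_0) > 0$. Call a job large if its size is at least $\beta_0$. By Poisson thinning, large jobs arrive according to an independent Poisson process of rate $\lambda p > 0$. Letting $Q_t^L$ denote the number of large jobs in the queue at time $t$, it suffices to prove $Q_t^L \uparrow \infty$ a.s., since $Q_t \geq Q_t^L$.

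The main step is to establish the following analog of Proposition~\ref{prop:1}: if a failure occurs at $t = 0$ and there are at least $k$ large jobs present, then
\[
\Pr[\text{no large job ever completes service}] \geq 1 - O\bigl(\E A\,\ind(A \geq \beta_0 k) + e^{-\theta k}\bigr).
\]
Two modifications of the original argument are needed. First, a bootstrap-style monotonicity: as long as $Q_t^L \geq 2^{n-1} k$ during phase $n$, the total queue satisfies $Q_t \geq Q_t^L \geq 2^{n-1} k$, so every job receives PS rate at most $1/(2^{n-1} k)$. Hence, on the event $\mathcal{A}_n = \{A_i < 2^{n-1}\beta_0 k\}$ for the relevant range of $i$, each large job accumulates strictly less than $\beta_0$ work over any single availability period and cannot complete; since new large arrivals only increase $Q_t^L$, the bootstrap hypothesis is preserved. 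Small jobs may come and go, but the large-job PS rate remains bounded above by $1/Q_t^L \leq 1/(2^{n-1} k)$ regardless of small-job dynamics. Second, the doubling of the large-job population across phases is driven by large arrivals only, which form a Poisson process at rate $\lambda p$; accordingly, one picks $c \eqdef 3/(\lambda p\,\E A)$ and invokes the same Cramer-type large-deviation estimates on the thinned stream.

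With this variant of Proposition~\ref{prop:1} in hand, the rest mirrors Theorem~\ref{thm:0} and Corollary~\ref{cor:0}. Define $T_k$ as the first time that there are at least $k$ large jobs in the queue when a failure occurs. A large arrival that occurs in an open interval of length $\beta_0$ immediately preceding a failure cannot complete within that window (its size exceeds the time remaining), so $T_k$ is bounded by the first failure preceded by at least $k$ large arrivals in such a window; this event has positive probability per failure and is i.i.d.\ across failures, giving $T_k < \infty$ a.s. Conditioning on the queue state at $T_k$ and applying the variant above yields $\Pr[\text{no large job completes after } T_k] \to 1$ as $k \to \infty$. Finally, the argument of Corollary~\ref{cor:0} shows that on the event ``no large job ever completes after some finite time,'' the rate $\lambda p > 0$ of large arrivals forces $Q_t^L \uparrow \infty$, and hence $Q_t \uparrow \infty$ a.s.

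The principal technical obstacle is the bootstrap step itself, which was immediate in Proposition~\ref{prop:1} because every job was assumed large, but here small jobs can complete and leave, potentially accelerating the remaining large jobs. The saving observation is that $1/Q_t \leq 1/Q_t^L \leq 1/(2^{n-1} k)$ uniformly in the number of small jobs, so the threshold $A_i < 2^{n-1}\beta_0 k$ continues to rule out any completion of a large job regardless of the small-job population's fluctuations.
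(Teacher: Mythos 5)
Your proof is correct, and it takes a genuinely different route from the paper's. The paper's proof introduces an \emph{auxiliary} M/G/1 PS queue $\underline Q^\beta_t$ fed only by the thinned Poisson stream of jobs of size at least $\beta$, asserts the pathwise domination $Q_t \geq \underline Q^\beta_t$ (a monotonicity of PS under added traffic that is left implicit), and then invokes Corollary~\ref{cor:0} directly on the auxiliary system, which by construction satisfies $\Pr(B\ge\beta)=1$. You avoid introducing an auxiliary system entirely: you track the large-job count $Q_t^L$ \emph{inside the actual queue} and re-run the machinery of Proposition~\ref{prop:1}, Theorem~\ref{thm:0}, and Corollary~\ref{cor:0} for that subpopulation. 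The saving observation $1/Q_t \le 1/Q_t^L$ is exactly what replaces the bootstrap of Proposition~\ref{prop:1}: small jobs arriving can only depress the PS rate further, small jobs departing leave $Q_t^L$ untouched, so the event $\mathcal{A}_n$ still precludes any large-job completion in phase $n$, and the thinned arrival rate $\lambda p$ with $c=3/(\lambda p\,\E A)$ drives the geometric growth across phases; the finiteness of $T_k$ via at least $k$ large arrivals in a $\beta_0$-window before a failure is likewise handled correctly. Your route is longer, since it re-establishes the Cram\'er estimates for the thinned stream, but it is more self-contained: it sidesteps the PS monotone-coupling inequality $Q_t \ge \underline Q^\beta_t$ that the paper asserts in one line without justification, whereas the paper's route is shorter at the cost of relying on that unproven comparison.
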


\begin{remark}
Note that $B >0$ a.s. is just a non-triviality condition that excludes zero-sized jobs, i.e. non-existent ones.
\end{remark}

\begin{proof}
First, by assumption, we can pick $\beta>0$ such that $\Pr[B \geq \beta] >0$. Then, for any time $t$, let $ Q^\beta_t$ be the number of jobs whose size is at least $ \beta$, i.e. they satisfy $\Pr[B \geq \beta]=1$, and $q^\beta_t$ be the number of jobs that are smaller than $\beta$. Hence, 
\begin{align*}
Q_t =  Q^\beta_t + q^\beta_t \geq \underline Q^\beta_t,
\end{align*}
where $\underline Q^\beta_t$ is the queue in a system with the same arrival process where jobs of size $B \geq \beta$ are served in isolation. By Corollary~\ref{cor:0}, $\underline Q^\beta_t \uparrow  \infty$ a.s., and, therefore, we obtain $Q_t\uparrow  \infty$ a.s. 
\end{proof}

\subsubsection{Extension to DPS}
PS cannot capture the heterogeneity of users, which is associated with unequal sharing of resources. Hence, we discuss the \textit{Discriminatory} Processor Sharing (DPS) queue which is a multi-class generalization of the PS queue: all jobs are served simultaneously at rates that are determined by a set of weights $w_i, i = 1, \dots, K $. If there are $n_j$ jobs in class $j$, each class-$k$ job receives service at a rate $c_k = w_k/ \sum_{j=1}^K w_j n_j$. 

DPS has a broad range of applications. In computing, it is used to model Weighted-Round-Robin (WRR) scheduling. In communication networks, DPS is used for modeling heterogenous, e.g. with different round trip delays, TCP connections. Despite the fact that the PS queue is well understood, the analysis of DPS has proven to be very hard; yet, our previous result on PS is easily extended to DPS in the corollary below. 

\begin{corollary}
Under the conditions in Theorems~\ref{thm:0} and \ref{thm:0-b}, the discriminatory processor sharing (DPS) queue is also always unstable, with the same conclusion as in Theorems~\ref{thm:0} and \ref{thm:0-b}, respectively.
\end{corollary}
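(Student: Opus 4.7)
The plan is to reuse the proofs of Theorems~\ref{thm:0} and~\ref{thm:0-b} with a single modification that absorbs the DPS weights. Let $w_{\min}$ and $w_{\max}$ denote the smallest and largest weights, set $\gamma \eqdef w_{\min}/w_{\max} \in (0,1]$, and define $\beta' \eqdef \gamma \beta$. The key observation is that when a DPS queue contains $n = \sum_j n_j$ jobs in total, every individual job receives service at rate
\begin{equation*}
c_k \;=\; \frac{w_k}{\sum_j w_j n_j} \;\leq\; \frac{w_{\max}}{w_{\min}\, n} \;=\; \frac{1}{\gamma\, n},
\end{equation*}
so a job of size at least $\beta$ requires at least $\gamma \beta n = \beta' n$ units of cumulative availability time to complete whenever the queue holds at least $n$ jobs.

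First I would redo Proposition~\ref{prop:1} in the DPS setting by substituting $\beta'$ for $\beta$ throughout the definitions of the events $\mathcal{A}_i$. Concretely, $\mathcal{A}_1(k)$ becomes $\{A_1 < \beta' k, \ldots, A_{ck} < \beta' k\}$; on this event no job can depart during the first $ck$ failure cycles because, by the bound above, each job still requires at least $\beta' k$ service time. The Poisson concentration estimate for $Z_{2k/\lambda}$, the Cram\'er bound on $T_1$, and the bootstrapping across the doubling epochs $\T_n$ are all untouched, since the arrival and failure processes are unchanged. The final inequality \eqref{eq:prop1} then carries over with $\Pr(A \geq \beta k)$ replaced by $\Pr(A \geq \beta' k)$, and since $\E A < \infty$ and $\beta' > 0$, we still have $\E A\, \ind(A \geq \beta' k) \to 0$ as $k \to \infty$. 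The proofs of Theorem~\ref{thm:0} and Corollary~\ref{cor:0} then go through verbatim in the DPS case.

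For the extension corresponding to Theorem~\ref{thm:0-b}, where the lower-bound assumption on $B$ is removed, I would mimic the thinning argument used in its original proof. Since $B > 0$ a.s., choose $\beta > 0$ with $\Pr[B \geq \beta] > 0$ and compare the original DPS queue $Q_t$ to the DPS queue $\underline Q^\beta_t$ driven only by the thinned arrival stream of jobs of size at least $\beta$. The DPS version of Corollary~\ref{cor:0} from the previous paragraph yields $\underline Q^\beta_t \uparrow \infty$ a.s., and the dominance $Q_t \geq \underline Q^\beta_t$ follows from the natural sample-path coupling together with the monotonicity of $c_k = w_k/\sum_j w_j n_j$ in each $n_j$: the additional small jobs in the original system can only weakly slow down the large-sized jobs, so every large job completed in the original system has also been completed in the thinned one. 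Hence $Q_t \uparrow \infty$ a.s. This coupling is the main obstacle and the only step requiring genuinely new bookkeeping relative to the PS proofs; everything else is a direct translation with $\beta$ rescaled by $\gamma$.
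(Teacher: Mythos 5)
Your argument is correct, and the central observation is the same one the paper uses: the DPS per-job rate is bounded by $c_k \leq w_{\max}/(w_{\min}\,n)$, i.e., DPS is dominated rate-wise by a PS queue of capacity $c = w_{\max}/w_{\min}$. Where you differ is in how this bound is exploited. You push the bound \emph{into} Proposition~\ref{prop:1}, rescaling $\beta \to \beta' = \beta/c$ in the definitions of the events $\mathcal{A}_i$, and then rederive Theorem~\ref{thm:0}, Corollary~\ref{cor:0}, and (via a separate thinning coupling) Theorem~\ref{thm:0-b} for DPS. The paper instead makes a single sample-path comparison $Q_t^{DPS} \geq Q_t^{PS(c)}$ and then invokes Theorems~\ref{thm:0} and~\ref{thm:0-b} as black boxes applied to a PS queue of capacity $c$ (which is equivalent to a unit-capacity PS queue with jobs scaled by $1/c$, so the hypotheses $\E A<\infty$, $B>0$ or $B\geq\beta$ carry over). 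This black-box reduction is slightly cleaner precisely on the point you flag as the ``main obstacle'': the paper never needs a separate DPS thinning coupling for the Theorem~\ref{thm:0-b} version, because the single domination $Q_t^{DPS}\geq Q_t^{PS(c)}$ already covers both cases. Your route trades that one coupling for the convenience of not having to justify the DPS-vs-PS($c$) sample-path ordering; both are valid, but the paper's version requires fewer pieces to re-examine.
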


\begin{proof}
Without loss of generality, assume that the set of weights is ordered such that $w_1\leq w_2 \dots \leq w_K$. In the M/G/1 DPS queue, the service allocation at any given time $t$ for a single customer in class $k$ is given by 
\begin{align*}
c_k(t) = \frac{w_k}{\sum_{i=1}^K w_i n_i(t)} \leq \frac{ w_k}{w_1 \sum_{i=1}^K n_i(t)} \leq \frac{w_K  }{w_1 Q_t}. 
\end{align*}
Note that $c(t)  = w_K /(w_1 Q_t)    $ is the service rate in a PS queue with capacity $c = w_K /w_1 \geq  1$. Therefore, each class-$k$ job, $k= 1 \dots K$, in the DPS queue is served at a lower rate than the rate $ c$ of the PS queue. Hence, 
\begin{equation*}
Q^{DPS}_t \geq Q^{PS(c)}_t,
\end{equation*}
and since, under the conditions in Theorem~\ref{thm:0}, the PS queue is always unstable, it follows that the DPS queue is also unstable. 
\end{proof}

\subsection{Stability of First Come First Serve Queue}\label{ss:0-1}
In the FCFS discipline, each job is processed one at a time and therefore the expected service time for a single job is given in Definition~\ref{def:S} as 
\begin{align*}
 \E [S] &= \E \left[ \sum_{i=1}^{N-1} A_i + B\right].
 \intertext{Note that $N \eqdef \inf \{  n: A_n > B\} $ is a well defined stopping time for the process $(A, \{A_n\}_{n \geq 1})$, and thus the expected service time follows from Wald's identity as}
  \E [S] &= \E \left[ \sum_{i=1}^{N} A_i - A_N + B\right] \\
  &= \E [N] \E [A] - \E [A_N] + \E [B].  
 \end{align*}
Now, assuming that the availability periods $A$ are exponentially distributed with rate $\mu$ (Poisson failures), the expected service time is given by 
\begin{align}\label{eq:ES}
\nonumber \E [S] &  = \E [N] \E [A] - (\E[ A] + \E[B] ) + \E[B]  \\
& = (\E [N] - 1) \E [A],
\end{align}
since $\E[A_N] =\E \left[ \E[A| A>B]  \right]= \E[A + B] = \E [A] + \E [B]$, due to the memoryless property of the exponential distribution. 

The necessary and sufficient condition for the stability of the M/G/1 FCFS queue is 
\begin{align*}
\lambda \E [S] < 1. 
\end{align*}
Now, let the jobs be fixed and all equal to some positive constant $\beta >0$. Since $A$ is exponentially distributed with rate $\mu$, then 
\begin{align} \label{eq:EN}
\nonumber \Pr[N > n] &= \Pr(A \leq \beta )^n = G(\beta )^n,
\intertext{and thus, the expected number of restarts is}
 \E [N] = \sum_{n=0}^\infty \Pr[N>n] &= \sum_{n=0}^\infty G(\beta)^n = \bar G(\beta)^{-1} = e^{\mu  \beta }.
 \end{align}
 
Furthermore, for fixed jobs $B = \beta$, we can compute explicitly $\E [S]$ without the exponential assumption on $A$. To this end, note that
\begin{align*}
\E [S] &  = \E \left[ \sum_{i=1}^{N-1} A_i + \beta  \right] =  \E \left[ \sum_{n=2}^{\infty} \mathbf{1}_{\{ N = n \}}  \sum_{i=1}^{n-1} A_i + \beta  \right] \\
& =   \E \left[ \sum_{n=2}^{\infty} \mathbf{1}_{\{ A_1 < \beta, A_2 < \beta, \dots, A_{n-1}< \beta, A_n \geq \beta \}}  \sum_{i=1}^{n-1} A_i   \right] + \beta  \\
& = \sum_{n=2}^{\infty} \E  \left(  \sum_{i=1}^{n-1} A_i \mathbf{1}_{\{ A_1 < \beta, A_2 < \beta, \dots, A_{n-1}< \beta, A_n \geq \beta \}}  \right) + \beta,  \\
\intertext{and since $(A,\{A_i\}_{i \geq 1}) $ are i.i.d, we obtain}
\E [S]  & = \sum_{n=2}^{\infty} (n-1)  \E \left[ A \mathbf{1}_{\{ A < \beta\}} \right]  \Pr( A < \beta)^{n-2} \Pr(A \geq \beta )  + \beta  \\
& = \sum_{n=2}^{\infty} (n-1)  \E \left[ A \mathbf{1}_{\{ A < \beta\}} \right]  \Pr( N= n-1)  + \beta,
\end{align*} 
where we recall that $\Pr(N=n) = \Pr(A<\beta)^{n-1} \Pr(A \geq \beta) $, by definition, and thus,
\begin{align}\label{eq:ES2}
\nonumber \E [S]   & =  \E \left[ A \mathbf{1}_{\{ A < \beta\}} \right]   \sum_{n=1}^{\infty} n \Pr( N= n)  + \beta \\
 & =  \E \left[ A \mathbf{1}_{\{ A < \beta\}} \right]  \E [N] + \beta.
\end{align} 
 
Hence, for exponential $A$, the preceding expression (or \eqref{eq:ES}) yields
\begin{align*}
\E [S] &=  (e^{\mu \beta} - 1) \mu^{-1},
\end{align*}
and the stability region reduces to 
\begin{equation}\label{eq:stab}
\lambda \E[S] = \lambda \mu^{-1} (e^{\mu \beta} - 1) < 1.
\end{equation} 

Note that $\E [S] \geq \mu^{-1} \mu \beta = \beta $, where $\lambda \beta <1$ gives the stability region of the ordinary M/G/1 queue without failures. We observe that as the jobs grow in size, the stability region shrinks. In other words, the larger the $\beta$, the slower the arrival rate the queue can accommodate. For FCFS scheduling, if it is not possible to adjust the arrival rate, we could potentially decrease the job sizes, e.g. apply fragmentation techniques, in order to maintain a large stability region without generating too much overhead, resulting from dividing a single job into many smaller ones.  

\section{GI/G/1 PS queue with restarts}\label{s:1}
In the previous section, we show that PS is unstable assuming Poisson arrivals. Here, we show that this result can be further generalized to more general arrival distributions, e.g. renewal process. However, to avoid technical complications we assume that the failure process is Poisson, i.e. the availability periods $A_i$ are exponential. To this end, we use $M_{(t_0, t_1]}$ to denote the number of Poisson failures in $(t_0,t_1]$ and write $M_t$ for intervals of the form $(0,t]$. Let $( \tau,  \{ \tau_n \}_{n \geq 1} )$ be an i.i.d. sequence, where $\tau_n$ represent the interarrival times of the renewal process. 

The main purpose of this section is to show that there is nothing special about the Poisson arrival assumption that leads to instability. Instead, the instability results from the interplay between sharing and retransmission/restart mechanisms. First, we prove the following proposition using similar arguments as in Proposition~\ref{prop:1}.

\begin{proposition}\label{prop:2}
Assume that at time $t=0$, a new job arrives and there are $Q_0 \geq k$ jobs in the GI/G/1 PS queue. If $\E A < \infty$, $\E \tau^{1+\delta} < \infty, 0 < \delta <1$ and $\Pr [B \geq \beta] = 1, \beta >0$, then for all $k \geq 1$
\begin{equation} \label{eq:1a}
\Pr[\text{no job ever completes service}] \geq 1 - O(\E A \ind (A \geq \beta k ) +  k^{- \delta}).
\end{equation}
\end{proposition}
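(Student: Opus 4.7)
The plan is to mirror the proof of Proposition~\ref{prop:1} step by step, replacing only the arrival-side concentration bound. Since the failure process is still exponential with $\E A<\infty$, the Cramer bound on the failure sums $T_n$ and the dyadic integral computation that produces the $\E A\,\ind(A\geq\beta k)$ term in~\eqref{eq:prop1} go through verbatim; the only real change is replacing the Poisson Chernoff bound on the number of arrivals by a $(1+\delta)$-moment bound for the renewal process, which will yield the $k^{-\delta}$ tail in~\eqref{eq:1a} in place of the Poisson exponential tail $e^{-\theta k}$.

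I would preserve the same phased decomposition: with $c=3\E\tau/\E A$, set $T_n=\sum_{i=(2^{n-1}-1)ck+1}^{(2^n-1)ck}A_i$, $\T_n=\sum_{m\leq n}T_m$, and $\A_n=\{A_i<2^{n-1}\beta k : i\in\text{phase }n\}$. Writing $\mathcal N_t$ for the number of renewal arrivals in $(0,t]$, define
\begin{equation*}
\underline E_n\eqdef \A_n\cap\{T_n\geq 2^n k\E\tau\}\cap G_n,\qquad G_n\eqdef\{\mathcal N_{(\T_{n-1},\T_n]}\geq 2^{n-1}k\},
\end{equation*}
so that on $\cap_n\underline E_n$ no job ever departs and $Q_{\T_n}\geq 2^n k$. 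The failure-only events $\A_n\cap\{T_n\geq 2^n k\E\tau\}$ satisfy $\Pr(\cap_n(\A_n\cap\{T_n\geq 2^n k\E\tau\}))\geq 1-O(\E A\,\ind(A\geq\beta k)+e^{-\theta k})$ by the identical Cramer and dyadic-integral arguments of Proposition~\ref{prop:1}.

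The only new ingredient is estimating $\Pr(G_n^c)$. Since the arrival and failure processes are independent, I would condition on the failures; on $\{T_n\geq 2^n k\E\tau\}$ I can write
\begin{equation*}
\{\mathcal N_{(\T_{n-1},\T_n]}<2^{n-1}k\}=\{\xi_n+\tau'_1+\cdots+\tau'_{2^{n-1}k-1}>T_n\},
\end{equation*}
where $\xi_n$ is the forward-recurrence time at $\T_{n-1}$ and $\tau'_j$ are the subsequent i.i.d.\ $\tau$-distributed interarrivals, and split this event into $\{\xi_n>T_n/4\}\cup\{\tau'_1+\cdots+\tau'_{2^{n-1}k-1}>3T_n/4\}$. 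The sum term is handled by the von Bahr--Esseen inequality $\E|\sum_{j=1}^m(\tau'_j-\E\tau)|^{1+\delta}\leq 2m\,\E|\tau-\E\tau|^{1+\delta}$ followed by Markov, yielding $O((2^{n-1}k)^{-\delta})$. For $\xi_n$, using independence of the renewal process from $\T_{n-1}$, the elementary bound $\Pr(\xi_n>x\mid\T_{n-1}=s)\leq \Pr(\tau>x)\,(1+\E\mathcal N_s)$ combined with $\E\tau^{1+\delta}<\infty$ again gives $\Pr(\xi_n>T_n/4)=O((2^{n-1}k)^{-\delta})$. Summing these tails over $n\geq 1$ produces a total of $O(k^{-\delta})$.

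Combining the two parts (and absorbing $e^{-\theta k}$ into $O(k^{-\delta})$), I obtain $\Pr(\cap_n\underline E_n)\geq 1-O(\E A\,\ind(A\geq\beta k)+k^{-\delta})$, which is \eqref{eq:1a}. The main obstacle, in contrast to the Poisson setting of Proposition~\ref{prop:1}, is the lack of independent increments of the renewal process across the random intervals $(\T_{n-1},\T_n]$: I bypass this by conditioning on the failure process (independent of the arrivals) and switching from the multiplicative Poisson argument to a union bound, with the polynomial $k^{-\delta}$ rate coming from the $(1+\delta)$-moment via von Bahr--Esseen and the forward-recurrence tail estimate.
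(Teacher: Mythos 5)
Your proof is correct, but it takes a genuinely different route from the paper's. The paper keeps the same phased decomposition as Proposition~\ref{prop:1} but \emph{swaps the roles} of arrivals and failures: it sets $T_n = \sum_i \tau_i$ over a deterministic range of interarrival indices (so phase $n$ ends after a fixed number of new arrivals), defines $\underline{E}_n = \{M_{T_n} \leq 2^{n-1}ck,\ \A_n\}$ where $M_{T_n}$ counts Poisson failures in the phase, and then invokes the Poisson memoryless property to argue that the $\underline{E}_n$'s are independent, preserving the multiplicative structure $\Pr(\cap_n \underline{E}_n) = \prod_n \Pr(\underline{E}_n)$. The concentration of $T_n$ around $k\E\tau$ is then handled by Lemma~1 of \cite{JT2007}, which produces the $k^{-\delta}$ term. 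You instead keep the \emph{failure-phased} decomposition of Proposition~\ref{prop:1} (deterministic ranges of $A_i$ indices), retain the product structure only for the failure-only events $\A_n \cap \{T_n \geq 2^n k\E\tau\}$, and replace the multiplicative Poisson arrival argument with a union bound on the arrival events $G_n^c$, controlled via a forward-recurrence tail estimate plus a von Bahr--Esseen moment bound. This buys you something: your argument does not actually exploit the exponentiality of $A$ anywhere (Cram\'er and the dyadic integral only need $\E A < \infty$, and your arrival bound only uses independence of the two processes), whereas the paper's independence claim for the $\underline{E}_n$'s genuinely leans on the Poisson failure assumption stated at the top of Section~\ref{s:1}. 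The price is that you must justify the forward-recurrence ingredient carefully: the bound $\Pr(\xi_n > x \mid \T_{n-1}=s) \leq \Pr(\tau>x)(1+\E\mathcal{N}_s)$ is fine, but you then need $\E\mathcal{N}_s = O(1+s)$ uniformly in $s$ under only $\E\tau^{1+\delta}<\infty$ (Lorden's inequality requires a second moment, so you should appeal to subadditivity of the renewal function or a crude comparison to a Bernoulli-thinned process instead), and you should note that on the restriction $\{T_n \geq 2^n k\E\tau\}$, $\xi_n$ (given $\T_{n-1}$) is independent of $T_n$ so the random threshold can be replaced by the deterministic one. With those two points spelled out, your union-bound route is sound and arguably cleaner than the independence bookkeeping in the paper.
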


\begin{proof}
Let $T_1 = \sum_{i=1}^{k} \tau_i$ be the cumulative time that includes the first $k$ arrivals and $M_{T_1}$ be the number of failures in $(0,T_1)$. Now, define the event $\mathcal{A}_1 \equiv \mathcal{A}_1(k) \eqdef \{ A_1 < \beta  k, A_2 <  \beta  k, \dots, A_{M_{T_1}} < \beta  k \}$. On this event, no job can leave the system since $Q_0 \geq k$ and all of them are at least of size $\beta$. Thus, if they were served in isolation, they could not have completed service in the first $M_{T_1}$ attempts.

Now, let $E_1$ denote the event that there is no departure in the first $M_{T_1}$ attempts and there are at most $ck$ failures in $(0, T_1]$. Formally, 
\begin{equation*}
E_1 \supset \underline{E}_1 \eqdef \{ M_{T_1} \leq ck, \mathcal{A}_1 \},
\end{equation*}
on the set $\{ Q_0 \geq k\}$. Now, observe that
\begin{align*}
 \Pr( \underline{E}_1) & \geq \Pr(M_{T_1} \leq c k,   A_1 < \beta  k, A_2 <  \beta  k, \dots, A_{M_{T_1}} < \beta  k ) \\
  & \geq \Pr(M_{T_1} \leq c k,   A_1 < \beta  k, A_2 <  \beta  k, \dots, A_{ck} < \beta  k ) \\
& \geq \Pr( A_1 < \beta k )^{ck} - \Pr( M_{T_1} > c k ).
\end{align*}

Next, note that
\begin{align*}
 \Pr(M_{T_1} > c k  ) &  = \Pr \left(M_{T_1} > c k  , T_1 \leq   \frac{3 k  \E \tau}{2} \right) + \Pr \left(M_{T_1} > c k, T_1 > \frac{ 3 k \E \tau}{ 2} \right) \\
 & \leq \Pr \left( M_{ \frac{ 3 k \E \tau }{2}  } > c k  \right)  + \Pr \left(T_1 >  \frac{ 3 k \E \tau }{ 2} \right),
 \intertext{where the first term is negligible for $c > 2 \lambda \E \tau $ since the expected number of failures is $3 k \lambda \E \tau /2  $. Now, observe that}
  \Pr(T_1 >  \frac{3 k \E \tau}{2}) & =  \Pr \left( \sum_{i=1}^{k} \tau_i > \frac{ 3k \E \tau}{2} \right) = \Pr \left(\sum_{i=1}^{k} (\tau_i - \E \tau) > \frac{3k \E \tau}{2} - k \E \tau \right).
 \intertext{Now, let $X_i  \eqdef \tau_i - \E \tau $, which are bounded from above since $ \E \tau  < \infty$, from our main assumption. Therefore, by choosing $h = 2^{-\delta} (\E \tau )^{1+\delta} $ and $y = \E \tau/ 4$ in Lemma~1 of \cite{JT2007}, we obtain}
  \Pr \left(\sum_{i=1}^{k} X_i  >  k \E \tau  /2  \right) &  \leq k \Pr( X_1 >  k \E \tau/ 4)  + \frac{ h k} {2^{-\delta} (k \E \tau)^{1+\delta} } \\ 
  & \leq k \Pr( \tau_1 >  k \E \tau /4 + \E \tau)  + \frac{1} { k^{\delta} }\\
  & \leq k \frac{\E \tau^{1+\delta}}{(k \E \tau/4 +  \E \tau)^{1+\delta}} + k^{-\delta}   \leq 2  k^{-\delta}.
\end{align*}

Therefore, 
 \begin{align*}
\Pr( \underline{E}_1)  &  \geq    (1- \Pr(A \geq \beta k))^{ck} -  2  k^{-\delta},
\intertext{where using $1-x \geq e^{-2 x}$ for small $x$, we have for all $k \geq k_0$}
\Pr( \underline{E}_1)   \geq  e^{- 2 ck \Pr(A \geq \beta k)}  -  2  k^{-\delta} & \geq  1 - 2 ck \Pr(A \geq \beta k)  -  2  k^{-\delta} \\
& \geq  e^{- 4 ck \Pr(A \geq \beta k)  -  4  k^{-\delta}}.
\end{align*}

Next, at time $\T_1 = T_1$, on event $\underline{E}_1$, the queue has at least $2k$ jobs, e.g. $Q_{\T_1} \geq 2k$, and no jobs have departed. Similarly as before, let $T_2 = \sum_{i=k}^{3 k} \tau_i$ be the cumulative time that includes the next $2k$ arrivals, and define $\mathcal{A}_2 \equiv \mathcal{A}_2(k) = \{ A_{M_{T_1} +1} <2 \beta  k, A_{ck+ 2} < 2 \beta  k, \dots, A_{M_{T_1 + T_2}} < 2 \beta  k \}$. The probability that no job departs in $(0, \T_2]$, where $\T_2 = T_1 + T_2$, is lower bounded by 
\begin{align}\label{eq:E1E2-b}
\nonumber \Pr(\text{no job departs in} (0,\T_2]) & \geq \Pr( M_{T_1} \leq ck, \A_1, Q_{\T_1} \geq 2k, M_{(\T_1,\T_2]} \leq 2ck, \A_2)\\
& \geq \Pr( M_{T_1} \leq ck, \A_1, M_{(\T_1,\T_2]} \leq 2ck, \A_2), 
\end{align}
since $\{ Q_{\T_1} \geq 2k \} \supseteq \{M_{T_1} \leq ck, \A_1 \}$ on the set $\{ Q_0 \geq k \}$. 

Now, if $E_2$ is the event that there is no departure in the next $M_{T_2}$ attempts and there are at most $2ck$ failures in $(\T_1, \T_2]$, then $E_2 \supset \underline{E}_2 \eqdef \{ M_{T_2} \leq 2 c k, \mathcal{A}_2 \} $; note that $ \underline{E}_2$ is independent of $\underline{E}_1$ due to Poisson memoryless property. Via identical arguments as before, we obtain
\begin{align*}
\Pr( \underline{E}_2) & \geq \Pr(M_{T_2} \leq 2ck, A_{ck+1} < \beta k, \dots, A_{3ck}< \beta k) \\
& \geq  e^{- 8 ck \Pr(A \geq 2 \beta k)  - 4 (2k)^{-\delta} }.
\end{align*}
Therefore, at time $\T_2$, on event $\underline{E}_1 \cap \underline{E}_2$, there are at least $4k$ jobs.

In general, for any $n$, we can extend the reasoning from \eqref{eq:E1E2-b} to obtain
\begin{align*}
\Pr(\text{no job departs in} (0,\T_n]) & \geq \Pr( M_{T_1} \leq ck, \A_1, M_{T_2} \leq 2ck, \A_2, \dots, M_{T_n} \leq 2^{n-1}k, \A_n)\\
& =  \Pr(\underline{E}_1   \cap \underline{E}_2 \cap \dots \cap \underline{E}_n),
\end{align*}
where $\underline{E}_n = \{ M_{T_n} \leq 2^{n-1}c k, \A_n \}$ and $T_n = \sum_{i= (2^{n-1}-1)k+1}^{(2^{n}-1) k} \tau_i$. Similarly, 
\begin{displaymath}
\Pr( \underline{E}_n)  \geq e^{- 2^{n+1} ck \Pr(A \geq 2^{n-1} \beta k)  - 4( 2^{n-1} k)^{-\delta} }.
\end{displaymath}

Hence, we obtain
\begin{align*}
  \Pr(E_1  \cap E_2 \cap \dots \cap E_n) & \geq \prod_{i=1}^{n} e^{- 2^{i+1} ck \Pr(A \geq 2^{i-1} \beta k)  - 4( 2^{i-1} k)^{-\delta} }\\
& = e^{ -4 \sum_{i=0}^{n-1} 2^{i} c k \Pr(A \geq 2^{i} \beta k) - 4 k^{-\delta}  \sum_{i=0}^{n-1} ( 2^{i})^{-\delta}}\\
& \geq e^{ -4 \sum_{i=0}^{\infty} 2^{i} c k \Pr(A \geq 2^{i} \beta k) -4 k^{-\delta}  \sum_{i=0}^{\infty} 2^{-\delta i} }. 
\intertext{Now, observe that $\sum_{i=0}^{\infty} 2^{-\delta i}< \infty$, and thus we can pick $H >0 $ such that}
  \Pr(E_1  \cap E_2 \cap \dots \cap E_n)  & \geq  e^{ -4 \sum_{i=0}^{\infty} 2^{i} c k \Pr(A \geq 2^{i} \beta k) -  H k^{-\delta}} .
\end{align*}

The remainder of the proof follows identical arguments as Proposition~\ref{prop:1}. Thus, 
\begin{align*}
  \Pr(\text{no job ever completes service})   \geq  1 - H(\E A \ind(A \geq \beta k ) +  k^{-\delta}).
\end{align*} 
\end{proof}

\begin{theorem}\label{thm:0b}
In the GI/G/1 PS queue, if $\E A < \infty, \E \tau^{1+\delta}, 0 <\delta <1$ and $\Pr[B \geq \beta] = 1, \beta >0$, then
\begin{displaymath}
 \lim_{t \rightarrow \infty} \Pr (\text{no job ever completes service after time $t$}) =1. 
\end{displaymath}
\end{theorem}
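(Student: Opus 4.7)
The plan is to mirror the proof of Theorem~\ref{thm:0}, replacing Proposition~\ref{prop:1} by Proposition~\ref{prop:2}. Let $T_k$ be the first arrival instant at which the queue contains at least $k$ jobs. Assuming $T_k < \infty$ a.s., the rest of the argument transcribes almost verbatim: by conditioning on $(T_k, Q_{T_k}, \mathcal{B})$ and applying Proposition~\ref{prop:2} to the future after $T_k$, I would obtain, for any $t > 0$,
\begin{align*}
\Pr(\text{no job leaves after time } t) \geq \Pr(T_k \leq t)\,\bigl(1 - H(\E A \ind(A \geq \beta k) + k^{-\delta})\bigr),
\end{align*}
where the conditional independence of the post-$T_k$ future from the past is justified by the fact that $T_k$ is an arrival instant (so the renewal arrival process restarts afresh) and by the memoryless property of the Poisson failure process. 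Sending $t \to \infty$ first and then $k \to \infty$ yields the theorem.

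The genuinely new ingredient, relative to the M/G/1 case, is establishing $T_k < \infty$ a.s.\ when arrivals are merely renewal and therefore not memoryless. My plan is to produce a sequence of events $\{E_n\}_{n \geq 1}$ that each force the queue to reach level $k$ at an arrival, and then to invoke a conditional Borel--Cantelli argument. Concretely, letting $\sigma_j$ denote the $j$-th arrival time, define $E_n$ to be the event that every availability period of the failure process overlapping the interval $[\sigma_{nk}, \sigma_{(n+1)k}]$ has duration strictly less than $\beta$. On $E_n$, no job present at $\sigma_{nk}$ and no job arriving during this interval can complete service, since under PS a job's service accumulated between two consecutive failures is bounded by the intervening availability period, which is strictly less than $\beta$, while all jobs have size at least $\beta$. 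Consequently $Q_{\sigma_{(n+1)k}} \geq k$ on $E_n$, and hence $T_k \leq \sigma_{(n+1)k}$.

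To show that $E_n$ occurs infinitely often a.s., I would condition on the arrival sequence $\{\sigma_j\}$. Because the Poisson failure process is independent of the arrivals and the $E_n$'s depend on failures in disjoint intervals, they are conditionally independent given $\{\sigma_j\}$. A direct computation analogous to the one in Proposition~\ref{prop:1} gives $\Pr(E_n \mid \{\sigma_j\}) \geq p(s_n) > 0$, where $s_n = \sigma_{(n+1)k} - \sigma_{nk}$ and the function $p$ depends only on the failure rate and $\beta$. By the strong law of large numbers applied to $k$-block sums of the i.i.d.\ interarrival times, $s_n \to k\,\E \tau$ a.s., so $p(s_n)$ is bounded below by a positive constant for all sufficiently large $n$ a.s., and therefore $\sum_n \Pr(E_n \mid \{\sigma_j\}) = \infty$ a.s. L\'evy's conditional Borel--Cantelli lemma then gives $\Pr(E_n \text{ i.o.}) = 1$, from which $T_k < \infty$ a.s. I expect this Borel--Cantelli step to be the main obstacle, as it requires carefully separating the two independent sources of randomness (the renewal arrivals and the Poisson failures) to set up the conditional independence. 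Once $T_k < \infty$ a.s.\ is in hand, the theorem follows by the limiting argument in the first paragraph.
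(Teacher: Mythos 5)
Your plan mirrors the paper's proof precisely: transplant the argument of Theorem~\ref{thm:0}, substitute Proposition~\ref{prop:2} for Proposition~\ref{prop:1}, and separately establish $T_k < \infty$ a.s. You are also right that $T_k$ should be an arrival instant rather than a failure instant, which matches Proposition~\ref{prop:2}'s hypotheses; the paper's wording at this point (``a failure occurs'') appears to be a carryover from the Poisson case, and the paper itself treats the $T_k < \infty$ step tersely, so filling in a genuine Borel--Cantelli argument is exactly in the intended spirit.

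There is, however, a concrete error in that step. Since $s_n = \sigma_{(n+1)k} - \sigma_{nk}$ is a sum of $k$ i.i.d.\ interarrival times, the $s_n$ are themselves i.i.d.\ across $n$ and therefore do \emph{not} converge a.s.\ to $k\E\tau$; the strong law only gives $n^{-1}\sum_{m\le n} s_m \to k\E\tau$. Consequently, the claim that $p(s_n)$ is bounded below by a positive constant for all sufficiently large $n$ is also false: if $\tau$ is unbounded, $s_n$ exceeds any fixed level infinitely often, and $p(s)\to 0$ as $s\to\infty$. The conclusion $\sum_n \Pr(E_n \mid \{\sigma_j\}) = \infty$ a.s.\ is nevertheless correct, but the justification should be: since the $s_n$ are i.i.d.\ with $\Pr(s_1 \le 2k\E\tau) \ge 1/2 > 0$ by Markov's inequality, the second Borel--Cantelli lemma gives $s_n \le 2k\E\tau$ for infinitely many $n$ a.s., and along that subsequence $p(s_n) \ge p(2k\E\tau) > 0$, so the series diverges. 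One further minor repair: as stated, $E_n$ involves availability periods \emph{overlapping} $I_n = [\sigma_{nk}, \sigma_{(n+1)k}]$, so consecutive $E_n$'s share the availability period straddling a common endpoint and are not quite conditionally independent given $\{\sigma_j\}$; define $E_n$ instead via failures in the interior of $I_n$ (e.g., a failure in each $\beta/2$-subinterval) to get genuine conditional independence. With these two fixes the argument goes through and coincides in substance with the paper's.
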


\begin{proof}
Similarly as in the proof of Theorem~\ref{thm:0}, we observe the system at time $T_k$ when there are $k$ jobs in the queue and a failure occurs. Since the arrivals are non Poisson, we need additional reasoning to ensure that $ T_k < \infty$ a.s. In this regard, let $t_k$ be a large enough time interval such that with positive probability it includes at least $k$ arrivals. Next, we split this interval into smaller ones of size $\beta$ and require that there is a failure in each of those intervals. Since the failures are Poisson, this event has a positive, albeit extremely small, probability. Hence, the first time $T_k$ that the queue has at least $k$ jobs and a failure occurs is a.s. finite. 

Now, the remainder of the proof follows the same arguments as in Theorem~\ref{thm:0} of Section~\ref{s:0}. We omit the details. 
\end{proof}
Similarly as in Theorem~\ref{thm:0-b} of Section~\ref{s:0}, we drop the condition $\Pr[B \geq \beta] = 1$ and prove general instability. 

\begin{theorem}\label{thm:1b}
In the GI/G/1 PS queue, if $\E A < \infty, \E \tau^{1+\delta}, 0 <\delta <1$ and $B >0$ a.s., we have as $t \uparrow \infty$,
\begin{displaymath}
Q_t  \uparrow  \infty \quad \text{a.s.} 
\end{displaymath}
\end{theorem}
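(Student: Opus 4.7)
My plan is to follow the same reduction used in the proof of Theorem~\ref{thm:0-b}, replacing the M/G/1 ingredients by the GI/G/1 counterparts developed in this section. Since $B>0$ almost surely, I can choose $\beta>0$ such that $p \eqdef \Pr[B\geq \beta]>0$. On the same probability space I construct a companion PS queue $\underline{Q}^\beta_t$ which uses the same failure sequence $(A,\{A_n\})$ and the same underlying arrival times, but retains only those arrivals whose sizes satisfy $B \geq \beta$. A pathwise coupling then yields $Q_t \geq \underline{Q}^\beta_t$: by induction over event epochs (arrivals, failures, potential departures), the set of uncompleted big jobs in the sub-system is always a subset of the uncompleted big jobs in the original, since the original system hosts at least as many jobs at every instant and hence serves each big job at a weakly slower PS rate, so a big job that departs in the original must already have departed in the sub-system.

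To apply Theorem~\ref{thm:0b} to $\underline{Q}^\beta_t$, I verify its hypotheses. The availability process is unchanged, so $\E A < \infty$; the minimum job size is $\beta>0$ by construction. The only point needing genuine work is the inter-arrival moment condition: the thinned process is renewal with inter-arrival time $\tau' = \sum_{i=1}^{G}\tau_i$, where $G$ is geometric with parameter $p$ and independent of the $\tau_i$'s. Conditioning on $G$ and using the elementary bound $(x_1+\cdots+x_n)^{1+\delta} \leq n^{\delta}(x_1^{1+\delta}+\cdots+x_n^{1+\delta})$ yields
\begin{displaymath}
\E (\tau')^{1+\delta} \leq \E[G^{1+\delta}] \cdot \E \tau^{1+\delta} < \infty,
\end{displaymath}
because all moments of a geometric random variable are finite. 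This moment-propagation step is the main technical obstacle; everything else is structural.

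Theorem~\ref{thm:0b} then gives $\Pr(\text{no job of the sub-system ever completes service after some finite time})=1$. The two-line reasoning of Corollary~\ref{cor:0} carries over verbatim to the GI/G/1 setting: the thinned renewal process still produces infinitely many arrivals a.s., and a queue without any departures after time $v$ can only grow. Hence $\underline{Q}^\beta_t \uparrow \infty$ a.s., and combined with the pathwise domination $Q_t \geq \underline{Q}^\beta_t$ this gives $Q_t \uparrow \infty$ a.s., as required.
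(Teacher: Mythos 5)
Your proof is correct and is exactly the reduction the paper intends when it says the proof is ``similar to that of Theorem~\ref{thm:0-b} and thus is omitted'': choose $\beta$ with $\Pr(B\geq\beta)>0$, dominate $Q_t$ pathwise by the thinned sub-system containing only the big jobs, and invoke Theorem~\ref{thm:0b} together with the Corollary~\ref{cor:0}-style argument. The one genuinely new technical point for the GI/G/1 case --- that the thinned inter-arrival time $\tau'=\sum_{i=1}^{G}\tau_i$, with $G$ geometric and independent of the $\tau_i$, still satisfies $\E(\tau')^{1+\delta}<\infty$ via $(x_1+\cdots+x_n)^{1+\delta}\leq n^{\delta}\sum x_i^{1+\delta}$ and $\E G^{1+\delta}<\infty$ --- is something the paper glosses over (it is automatic in the Poisson case by thinning), and you identify and resolve it correctly.
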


The proof is similar to the proof of Theorem~\ref{thm:0-b} and thus is omitted. Furthermore, the equivalent results could be stated for the DPS scheduler as well. Last, the preceding findings could be further extended to both non Poisson arrivals and non Poisson failures. However, the proofs would be much more involved and complicated; here, we avoid such technicalities. 


\section{Transient Behavior - Scheduling a Finite Number of Jobs}\label{s:3}
In the previous sections, we focus on the steady state behavior of the M/G/1 queue with restarts and prove that PS is always unstable for failure distributions with finite first moment. We also show instability for the GI/G/1 queue, assuming Poisson failures. In this section, in order to gain further insight into this system, we study its transient behavior. In this regard, we consider a queue with a finite number of jobs and no future arrivals and compute the total time until all jobs are completed. In Subsections~\ref{s:3-1} and \ref{s:3-2}, we analyze the system performance when the jobs are served one at a time and when Processor Sharing (PS) is used, respectively. More precisely, for a finite number of jobs with sizes $B_i, 1 \leq i \leq m$, and assuming no future arrivals, we study the completion time $\Theta_m$, until all $m$ jobs complete their service.

Note that in the case of traditional work conserving scheduling systems the completion time does not depend on the scheduling discipline and is always simply equal to $ \sum_{i=1}^m B_i$. However, in channels with failures there can be a stark difference in the total completion time depending on the scheduling policy. This difference can be so large that in some systems the expected completion time can be infinite while in others finite, or even having many high moments. 

Overall, we discover that, with respect to the distribution of the total completion time $\Theta_m$, serving one job at a time exhibits uniformly better performance than PS; see Theorems~\ref{thm:2} and \ref{thm:3}. Furthermore, when the hazard functions of the job and failure distributions are proportional, i.e. $\log \bar F(x) \sim \alpha \log \bar G (x)$, we show that PS performs distinctly worse for the light-tailed job/failure distributions as opposed to the heavy-tailed ones, see parts (i) and (ii) of Theorem~\ref{thm:3}.

Before presenting our main results, we state the following theorem on the logarithmic asymptotics of the time $\bar S =\sum_{i=1}^N A_i = S + (A_N - B)$, where $S$ is from Definition~\ref{def:S}. Note that $\bar S$ includes the remaining time $(A_N - B)$ until the next channel availability period, thus representing a natural upper bound for $S$. In the following, let $\vee \equiv \max$.

\begin{theorem}\label{thm:1}
If $\log \bar F(x) \sim \alpha \log \bar G (x)$ for all $x \geq 0$ and $\alpha >1$, and $\E [B^{\alpha + \delta}] < \infty, \E [A^{1\vee \alpha}] < \infty$ for some $\delta >0$, then
\begin{equation}\label{eq:1}
\lim_{t \rightarrow \infty}\frac{\log \Pr[ \bar S> t ] }{\log t} = - \alpha  \qquad \text{as } t \rightarrow \infty.
\end{equation} 
\end{theorem}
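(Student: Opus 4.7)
The plan is to derive matching logarithmic upper and lower bounds on $\Pr[\bar S > t]$ by isolating the dominant mechanism for $\bar S$ being large: the job size $B$ must be large enough that the geometric number of restarts $N$ (with success probability $\bar G(B)$ given $B$) is of order $t$. Since $\bar S = \sum_{i=1}^N A_i$, this gives $\bar S \asymp N\,\E A \asymp \E A/\bar G(B)$, so the event $\{\bar S > t\}$ essentially reduces to $\{\bar G(B) \lesssim 1/t\}$, whose probability equals $\bar F(\bar G^{-1}(c/t))$ and, by the hypothesis $\log \bar F \sim \alpha \log \bar G$, is of order $t^{-\alpha}$.

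For the upper bound on $\limsup \log \Pr[\bar S > t]/\log t$, I would fix $\epsilon > 0$, set $b_t^+ \eqdef \bar G^{-1}(t^{-(1-\epsilon)})$, and decompose
\[
\Pr[\bar S > t] \leq \bar F(b_t^+) + \Pr[\bar S > t,\, B \leq b_t^+].
\]
The first term satisfies $\log \bar F(b_t^+) \sim \alpha \log \bar G(b_t^+) = -\alpha(1-\epsilon)\log t$. For the second, I would condition on $B$ and use that $\bar G(B) \geq t^{-(1-\epsilon)}$ on the truncation, so $N$ is stochastically dominated by a geometric with mean $t^{1-\epsilon}$. Then, by standard moment inequalities for compound geometric sums (which invoke $\E[A^{1 \vee \alpha}] < \infty$), the quantity $\E[\bar S^p \ind\{B \leq b_t^+\}]$ is controlled for some $p$ slightly above $\alpha$, and Markov yields a decay faster than $t^{-\alpha}$. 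Letting $\epsilon \downarrow 0$ gives $\limsup \leq -\alpha$.

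For the lower bound on $\liminf \log \Pr[\bar S > t]/\log t$, I would choose a constant $c$ small enough (say $c \leq \E A/4$), set $b_t^- \eqdef \bar G^{-1}(c/t)$, and condition on $\{B \geq b_t^-\}$. There, $N$ is stochastically bounded below by a geometric with parameter $c/t$, hence $N \geq t/(2c)$ with probability tending to one. Since the pre-success $A_i$'s conditional on $B = b$ are i.i.d.\ as $A \mid A \leq b$ and $b \geq b_t^- \to \infty$, the strong law of large numbers gives $\bar S \geq (t/(4c))\E A \geq t$ with probability tending to one. Therefore,
\[
\Pr[\bar S > t] \geq (1-o(1))\,\bar F(b_t^-) = t^{-\alpha(1+o(1))},
\]
and $\liminf \geq -\alpha$.

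The main obstacle is controlling the truncated term $\Pr[\bar S > t,\, B \leq b_t^+]$ in the upper bound: one must combine the two moment hypotheses carefully. The $p$-th moment of the compound geometric sum scales roughly like $\E[\bar G(B)^{-p}]$, and convergence of the resulting integral in $b$ relies on the interplay between $\log \bar F \sim \alpha \log \bar G$ and the auxiliary assumption $\E[B^{\alpha+\delta}] < \infty$. The Markov exponent $p \in (\alpha, \alpha+\delta)$ and the threshold $b_t^+$ must be tuned so that the error term is genuinely $o(t^{-\alpha(1-\epsilon)})$. A secondary subtlety is that $\log \bar F \sim \alpha \log \bar G$ is only an asymptotic statement, so sub-polynomial factors must be absorbed into the $\epsilon$ slack before letting $\epsilon \downarrow 0$.
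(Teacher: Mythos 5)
Your plan is sound in broad outline, but it takes a genuinely different and substantially longer route than the paper. The paper's proof simply cites Theorem~6 of \cite{JT2007} for the logarithmic asymptotics $\log\Pr[S>t]/\log t \to -\alpha$ of $S = \sum_{i=1}^{N-1}A_i + B$ (the hypotheses $\E[B^{\alpha+\delta}]<\infty$ and $\E[A^{1\vee\alpha}]<\infty$ are there precisely to invoke that theorem); with that in hand the lower bound for $\bar S$ is immediate from $\bar S \geq S$, and the upper bound follows from $\Pr[\bar S>2x]\leq\Pr[S>x]+\Pr[A_N-B>x]$ together with the short conditional computation $\Pr[A_N-B>x]=\E\bigl[\bar G(B+x)/\bar G(B)\bigr]\leq\bar G(x)\,\E[N]$, which is $O(x^{-\alpha})$ because $\alpha>1$ gives $\E[N]<\infty$ and $\E[A^\alpha]<\infty$ gives $\bar G(x)=O(x^{-\alpha})$. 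Your plan instead re-derives the full logarithmic asymptotics of $\bar S$ from scratch via a truncation on $B$, the conditional geometric structure of $N$, and moment/Markov bounds --- effectively reproducing the machinery behind \cite{JT2007}'s Theorem~6 rather than using it as a black box. That is legitimate but heavy, and you correctly flag the tuning issues yourself. Two technical points to watch if you carry it out: in the lower bound, conditional on $B$ with $\bar G(B)\leq c/t$, the geometric $N$ satisfies $\Pr[N\geq t/(2c)]\to e^{-1/2}$, a positive constant rather than $1-o(1)$ (harmless on the log scale, but should be stated as such); and in the upper bound, the ``compound geometric sum'' $\bar S\,|\,B$ has a distinguished last summand $A_N$ distributed as $A$ conditioned to exceed $B$, which is not stochastically dominated by the pre-success terms and is precisely where the joint use of $\E[B^{\alpha+\delta}]<\infty$ and $\E[A^{1\vee\alpha}]<\infty$ must enter.
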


\begin{proof}
By Theorem~6 in \cite{JT2007}, when specialized to the conditions of this theorem, we obtain that $ \log \Pr[S>t] \rightarrow - \alpha \log t$ as $t \rightarrow \infty$. This immediately yields the lower bound for $\bar S =   S + ( A_N - B) \geq S$. For the upper bound, $\bar S =  S + ( A_N - B) $ and the union bound result in 
\begin{equation*}
\Pr [ \bar S > 2 x] \leq \Pr [S > x]  + \Pr [ A_N - B > x] .
\end{equation*}
 Hence, in view of Theorem~6 in \cite{JT2007}, we only need to bound $\Pr[A_N - B > x]$. To this end, observe that
\begin{align*}
\Pr[A_N - B > x] &= \Pr[A_N > B + x ] = \sum_{i = 1 }^{\infty} \Pr[A_i > B + x, N = i] \\
& = \sum_{i = 1 }^{\infty} \Pr[A_i > B + x, A_1 < B, \dots, A_{i-1} < B] \\
& = \sum_{i = 1 }^{\infty} \E  \left[ \Pr \left( A_i > B + x| B \right) \Pr \left( A_1 < B|B \right)^{i-1} \right] \\
& = \E   \left[ \frac {\bar G ( B + x)  } {\bar G (B) }  \right] \leq \bar G (x) \E [N],
\end{align*}
since $\E[N] = \E (1/\bar G (B) )$. Now, the condition $\alpha > 1$ guarantees that $\E[N] < \infty$ whereas $\E [A^{\alpha }]$ implies that $\bar G(x)  = O (1/x^{\alpha  })$. Thus, \eqref{eq:1} is satisfied. 
\end{proof}


\subsection{Serving One Job at a Time}\label{s:3-1}
In this subsection, we consider the failure-prone system that was introduced in Section~\ref{s:intro}, with unit capacity. The jobs are served one at a time, e.g. First Come First Serve (FCFS). Herein, we analyze the performance of this system assuming that, initially, there are $m$ jobs in the queue and there are no future arrivals. Specifically, we study the total completion time, which is defined below.

\begin{definition}\label{def:Thetam}
The total completion time is defined as the total time until all the jobs are successfully completed and is denoted as
 \begin{displaymath}
\Theta_m \eqdef \sum_{i=1}^{m} S_i, 
\end{displaymath}
where $m$ is the total number of jobs in the system and $S_i$ is the service requirement for each job. 
\end{definition}
Next, we define the forward recurrence time, i.e. the elapsed time between some fixed $t_0$ until the time that the next failure occurs after $t_0$. 

\begin{definition} \label{def:tau}
Let $L_t$ be the number of failures in the interval $(0,t)$, i.e. the number of regenerative points of the renewal process $(A, \{A_n\}_{n\geq1})$. 
The forward recurrence time, which corresponds to the elapsed time until the next failure after time $t$, is defined as
\begin{equation}
\tau_t : = \sum_{n=1}^{L_t + 1} A_n - t .
\end{equation}
\end{definition}

In the following theorem, we prove that the tail asymptotics of the total completion time, from Definition~\ref{def:Thetam}, under this policy is a power law of the same index as the service time of a single job.

\begin{theorem}\label{thm:2}
If $\log \bar F(x) \sim \alpha \log \bar G (x)$ for all $x \geq 0$ and $\alpha >1$, and $\E [B^{\alpha + \delta}] < \infty, \E [A^{1\vee \alpha}] < \infty$ for some $\delta >0$, then
\begin{equation*}
\lim_{t \rightarrow \infty}\frac{\log \Pr[\Theta_m > t ] }{\log t} = -\alpha.
\end{equation*}
\end{theorem}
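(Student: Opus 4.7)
The plan is to sandwich $\Theta_m$ between a single-job lower bound and a sum of $m$ i.i.d.\ upper bounds, then invoke Theorem~\ref{thm:1} on both sides. Because the per-job service time has logarithmic tail exponent $-\alpha$, both bounds should transfer this exponent to $\Theta_m$ without change: summing a fixed finite number $m$ of terms only inflates the tail probability by a factor polynomial in $m$, which is absorbed under $1/\log t$.

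For the lower bound, I would simply observe that $\Theta_m \geq S_1 \eqd S$, so that $\Pr[\Theta_m > t] \geq \Pr[S > t]$. The proof of Theorem~\ref{thm:1} already invokes Theorem~6 of \cite{JT2007} to obtain $\log \Pr[S > t]/\log t \to -\alpha$, and this immediately yields $\liminf_{t \to \infty} \log \Pr[\Theta_m > t]/\log t \geq -\alpha$.

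For the upper bound, I would construct a pessimistic coupling in which, after each job finishes, the server idles until the next channel failure before starting the next job. In this modified system each job's elapsed time is distributed as $\bar S = \sum_{j=1}^N A_j$, and the $m$ elapsed times $\bar S^{(1)}, \dots, \bar S^{(m)}$ are i.i.d., since each is a functional of a disjoint block of availability periods. Discarding the possibly useful residual of the current availability period can only delay the system: either the residual is too short to fit $B_{i+1}$ (so the two systems resume in lockstep at the next failure) or the residual suffices (in which case the actual system finishes the job strictly earlier). Hence $\Theta_m \leq \sum_{i=1}^m \bar S^{(i)}$ stochastically, and a union bound on the maximum gives
\begin{equation*}
\Pr[\Theta_m > t] \;\leq\; m \, \Pr\bigl[\bar S > t/m\bigr].
\end{equation*}
By Theorem~\ref{thm:1}, $\log \Pr[\bar S > t/m]/\log(t/m) \to -\alpha$, and since $\log(t/m)/\log t \to 1$ and $\log m / \log t \to 0$, we obtain $\log\bigl(m \Pr[\bar S > t/m]\bigr)/\log t \to -\alpha$, so $\limsup_{t \to \infty} \log \Pr[\Theta_m > t]/\log t \leq -\alpha$.

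The main obstacle is that the $S_i$'s are not exactly i.i.d.\ in the actual FCFS system, because the failure renewal process persists across job boundaries (each job begins mid-availability-period rather than at a fresh one). The coupling argument producing the i.i.d.\ dominators $\bar S^{(i)}$ is the only nonroutine piece of the proof; once it is in place, the rest is a union bound and a logarithmic change of variables $t \mapsto t/m$.
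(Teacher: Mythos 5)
Your proposal is correct and follows essentially the same route as the paper: a single-job lower bound $\Pr[\Theta_m > t]\geq\Pr[S_1>t]$, and an upper bound obtained by idling the server until the next failure after each completion so that the $m$ dominating service times $\bar S_i$ are i.i.d., followed by the union bound $\Pr[\Theta_m>t]\leq m\Pr[\bar S>t/m]$ and Theorem~\ref{thm:1}. The only difference is presentational: the paper proves the domination $\Theta_m\leq\sum_i\bar S_i$ rigorously by induction in the appendix, whereas you give the (correct) intuitive coupling argument but would need to flesh it out, in particular to handle the case where the actual system gets ahead and remains ahead over several jobs before resynchronizing at a failure epoch.
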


\begin{proof}
Recall that the service requirement for a job $B_i$ was previously defined as $S_i =\sum_{j=1}^{N_i-1} A_j + B_i$.

For the \emph{lower} bound, we observe that
\begin{align}\label{eq:1A}
\nonumber \Pr [\Theta_m > t ] & \geq \Pr[ S_1 > t], \\
 \intertext{since the total completion time is at least equal to the service time of a single job. By taking the logarithm and using Theorem~6 in \cite{JT2007}, we have }
 \frac{\log \Pr[\Theta_m  > t ]}{\log t} & \geq - (1+\epsilon)  \alpha.
\end{align}
 
For the \emph{upper} bound, we compare $\Theta_m$ with the completion time in a system where the server is kept idle between the completion time of the previous job and the next failure. Clearly, 
\begin{align}\label{eq:FCFS-ub}
\Theta_m \leq \bar \Theta_m \eqdef \sum_{i=1}^m \bar S_i,
\end{align}
where $  \bar S_i \eqdef \sum_{j=1}^{N_i} A_j $ are the service times that include the remaining availability period $A_{N_i}$.
We prove this intuitive claim more formally by induction in the appendix.

Then, we argue that
\begin{align}\label{eq:1B}
\nonumber \Pr[\Theta_m  > t ] & \leq \Pr \left[ \sum_{i=1}^m \bar S_i > t \right]  \leq m \Pr \left[  \bar S_1 > \frac{t}{m} \right],\\
\intertext{which follows from the union bound. By taking the logarithm and using Theorem~\ref{thm:1}, we have }
 \frac{\log \Pr[\Theta_m  > t ]}{\log t} & \leq - \alpha(1-\epsilon) + \frac{\log m}{\log t} \leq  -(1-2\epsilon) \alpha,
\end{align}
where we pick $t$ large enough such that $ \log t \geq  \log m /(\alpha \epsilon) $.

Letting $\epsilon \rightarrow 0$ in both \eqref{eq:1A} and \eqref{eq:1B} finishes the proof. 
\end{proof}

\subsection{Processor Sharing Discipline}\label{s:3-2}
In this subsection, we analyze the Processor Sharing discipline where $m$ jobs share the (unit) capacity of a single server. We present our main theorem on the logarithmic scale, which shows that the tail asymptotics of the total completion time is determined by the shortest job in the queue. In particular, under our main assumptions, this time is a power law, but it exhibits a different exponent depending on the job size distribution. 
\begin{itemize}
\item If the jobs are subexponential (heavy-tailed) or exponential, the total delay is simply determined by the time required for any single job to complete its service, as if it was the only one present in the queue.  
\item If the jobs are superexponential (light-tailed), the total delay is determined by the service time of the \emph{shortest} job. This job generates the heaviest asymptotics among all the rest.
\end{itemize}
Our main result, stated in Theorem~\ref{thm:3} below, shows that on the logarithmic scale the distribution of the total completion time $\Theta_m^{PS}$ is heavier by a factor $m^{\gamma-1}$ for superexponential jobs relative to the subexponential or exponential case. Therefore, in systems with failures and restarts, sharing the capacity among light-tailed jobs induces long delays, whereas, for heavy-tailed ones, PS appears to perform as good as serving the jobs one at a time. Interestingly enough, this deterioration in performance is determined by the time it takes to serve the shortest job in the system. 

Note that the in a PS queue with no future arrivals, the shortest job will depart first. Immediately after this, the server will continue serving the remaining $m-1$ jobs, and, similarly, the shortest job, i.e. the second shortest among the original $m$ jobs, will depart before all the others. This pattern will continue until the departure of the largest job, which is served alone.

\begin{theorem} \label{thm:3}
Assume that the hazard function $-\log \bar F(x)$ is regularly varying with index $\gamma \geq 0$. If $\log \bar F(x) \sim \alpha \log \bar G (x)$ for all $x \geq 0$ and $\alpha >1$, and $\E [B^{\alpha + \delta}] < \infty, \E [A^{1\vee \alpha}] < \infty$ for some $\delta >0$, then 
\begin{enumerate}[label=(\roman{*})]
\item if $\gamma \leq 1$, i.e. $B$ is subexponential or exponential, then 
\begin{equation*}
  \lim_{t \rightarrow \infty}\frac{-\log \Pr[\Theta_m^{PS} > t ] }{\log t} = \alpha,
\end{equation*}
\item if $\gamma > 1$, i.e. $B$ is superexponential, then
\begin{equation*}
\lim_{t \rightarrow \infty}\frac{-\log \Pr[\Theta_m^{PS} > t ] }{\log t} = \frac{\alpha}{ m^{\gamma-1}}< \alpha.
\end{equation*}
\end{enumerate}
\end{theorem}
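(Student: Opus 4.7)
The plan is to sandwich $\Theta_m^{PS}$ pathwise between two ``isolation-style'' service times whose logarithmic tails are controlled by Theorem~\ref{thm:1}, and then to pick the sharper of the resulting lower bounds in each regime.

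For the \emph{lower bound} I will use two inequalities. First, any tagged job $B_1$ cannot be completed in an availability period shorter than $B_1$ (progress is lost at every failure), so its completion time under PS, and hence $\Theta_m^{PS}$, dominates the single-job isolation service time $S$; by Theorem~\ref{thm:1} this already yields $\liminf (\log\Pr[\Theta_m^{PS}>t])/\log t\geq -\alpha$, which is exactly what case~(i) needs. Second, when all $m$ jobs share the capacity, the smallest job $B_{(1)}$ cannot complete in any period with $A<mB_{(1)}$, so
\[
\Theta_m^{PS} \ \geq\ \tilde S_{\min} \ :=\ \sum_{j=1}^{N_{\min}-1} A_j + m B_{(1)}, \qquad N_{\min} := \inf\{n: A_n > m B_{(1)}\},
\]
which is itself an isolation service time, now with the (random) effective job size $m B_{(1)}$. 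For the \emph{upper bound}, in any period of length $A\geq\sum_i B_i$ every PS job completes even if all $m$ were still present and restarted; hence
\[
\Theta_m^{PS} \ \leq\ \tilde S_{\mathrm{sum}} \ :=\ \sum_{j=1}^{N_{\mathrm{sum}}-1} A_j + \Sigma, \qquad \Sigma := \sum_{i=1}^m B_i,\qquad N_{\mathrm{sum}} := \inf\{n: A_n > \Sigma\},
\]
another isolation service time, now with effective size $\Sigma$. The pathwise sandwich $\tilde S_{\min}\leq \Theta_m^{PS}\leq \tilde S_{\mathrm{sum}}$ I would verify by induction on the failure epochs, in the same style as \eqref{eq:FCFS-ub}.

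With the sandwich in hand, both remaining computations reduce to controlling $\Pr[\tilde S_X>t]$ for $X\in\{m B_{(1)},\Sigma\}$. Using the representation $\Pr[\tilde S_X>t\mid X]\approx \exp(-c t\bar G(X)/\E A)$ and the hazard proportionality $-\log\bar G(y)\sim y^\gamma L(y)/\alpha$, the Laplace method localizes the mass to $X\geq x^*(t)$, where $(x^*)^\gamma L(x^*)\sim \alpha\log t$. For $X=m B_{(1)}$, independence gives $\Pr[B_{(1)}\geq x^*/m]=\bar F(x^*/m)^m\sim \exp(-m^{1-\gamma}(x^*)^\gamma L) = t^{-\alpha/m^{\gamma-1}}$, supplying the tight lower bound in case~(ii). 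For $X=\Sigma$, the subexponential single-big-jump principle gives $\Pr[\Sigma>x^*]\sim m\bar F(x^*)\sim m t^{-\alpha}$ when $\gamma\leq 1$, whereas for $\gamma>1$ the Cram\'er rate function for i.i.d.\ sums yields $\Pr[\Sigma>x^*]\asymp \bar F(x^*/m)^m\sim t^{-\alpha/m^{\gamma-1}}$; taking logarithms and dividing by $\log t$ then matches the corresponding lower bounds and closes both cases.

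The step I expect to be the main obstacle is making the Laplace/single-big-jump/Cram\'er reductions uniform in $t$. Fortunately this is exactly the machinery developed in Theorem~6 of \cite{JT2007} and already invoked in Theorem~\ref{thm:1} above; the only genuinely new ingredients are the order-statistic asymptotic for $B_{(1)}$ (immediate from $\bar F(x^*/m)^m$) and the Cram\'er bound for $\Sigma$ in the superexponential regime, both of which follow from standard arguments under the regularly varying hazard hypothesis.
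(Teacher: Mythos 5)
Your approach is correct and yields the right exponents, but the upper bound route is genuinely different from the paper's. The paper bounds $\Theta_m^{PS}$ by the departure-by-departure decomposition $\bar\Theta_m^{PS}=\sum_{i=1}^m\bar S_i^{PS}$, where $\bar S_i^{PS}$ is the time (in a system that idles between departures) for the $i$-th smallest job to leave; each $\bar S_i^{PS}$ is then treated as an isolation service time with random effective size $(m-i+1)B^{(i)}$, its log-tail exponent $\alpha_i=\alpha/(m-i+1)^{\gamma-1}$ is read off directly from the order-statistic tail $\bar F(x)^{m-i+1}$ and \eqref{eq:ma1}, and a union bound gives $\min_i\alpha_i$, i.e.\ $\alpha$ when $\gamma\le1$ and $\alpha/m^{\gamma-1}$ when $\gamma>1$. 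You instead collapse the whole upper bound into a single isolation service time $\tilde S_{\mathrm{sum}}$ with aggregate effective job $\Sigma=\sum_i B_i$, which is slicker (one term instead of $m$) but shifts the work into estimating $\log\Pr[\Sigma>x]$: this requires the single-big-jump principle for $\gamma<1$, a boundary argument for $\gamma=1$ (the claim $\Pr[\Sigma>x]\sim m\bar F(x)$ you invoke is not literally true for exponential jobs, only log-equivalent), and a Cram\'er-type estimate $\log\Pr[\Sigma>x]\sim m\log\bar F(x/m)$ for $\gamma>1$ that must be verified under a regularly varying hazard with a nontrivial slowly varying factor rather than the pure Weibull case. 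In fact the identity $m\log\bar F(x/m)\sim m^{1-\gamma}\log\bar F(x)$ works uniformly for all $\gamma\ge0$, so if you establish $\log\Pr[\Sigma>x]\sim m\log\bar F(x/m)$ (and a matching $\gamma\le1$ estimate) you do recover both regimes in one stroke. Your lower bounds coincide with the paper's: $\Theta_m^{PS}\ge S_1$ for $\gamma\le1$ and $\Theta_m^{PS}\ge\tilde S_{\min}$ (isolation time of effective size $mB^{(1)}$) for $\gamma>1$, the latter being exactly the paper's $S_1^{PS}$. Net assessment: your route is valid and arguably more economical in the number of terms, but it trades the paper's elementary order-statistic bookkeeping for a nonstandard sum-tail estimate that would need to be spelled out (particularly the $\gamma=1$ boundary and the slowly-varying corrections in the $\gamma>1$ Cram\'er step), and these details are currently the weakest part of the sketch.
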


\begin{remark} 
When $\alpha>1$, we easily verify that $\E[\Theta_m^{PS}] < \infty$ in case $(i)$, and the system is stable; if the jobs are superexponential, e.g. case $(ii)$, then $\E[\Theta_m^{PS}] = \infty$ if $\alpha < m^{\gamma-1}$. 
\end{remark}

\begin{proof}
Let $B^{(1)} \leq B^{(2)} \leq \dots \leq B^{(m)}$ be the order statistics of the jobs $B_1, B_2, \dots, B_m$. 

The assumption that $- \log \bar F(x)$ is regularly varying with index $\gamma$ implies that 
\begin{align}\label{eq:ma1}
\log \bar F(\lambda x)  \sim  \lambda^\gamma \log \bar F(  x) ,
\end{align}
for any $\lambda >0$. 

We begin with the \emph{lower} bound. 

\emph{(i) Subexponential or exponential jobs ($\gamma \leq 1$)}. \\
The total completion time is lower bounded by the time required for a single job to depart when it is exclusively served, e.g. if the total capacity of the system is used. Hence, it follows that
\begin{align} \label{eq:lb}
\Pr[ \Theta^{PS}_m > t] & \geq \Pr[S_1 > t ],
\end{align}
where $S_1$ is the service time of a single job of random size $B_1$, when there are no other jobs in the system. Now, recalling Theorem~6 in \cite{JT2007}, it holds that
\begin{align*}
\lim_{t \rightarrow \infty} \frac{\log \Pr[S_1 > t]}{ \log t} = -  \alpha.
\end{align*}
By taking the logarithm in \eqref{eq:lb}, the lower bound follows immediately.

\emph{(ii) Superexponential jobs ($ \gamma > 1)$}. \\
The total completion time is lower bounded by the delay experienced by the shortest job, and hence,
\begin{align} \label{eq:S1-}
\Pr[ \Theta^{PS}_m > t] & \geq \Pr[ S_1^{PS} > t ], 
\end{align}
where $ S_1^{PS}$ is the service time of job $B^{(1)}$. First, note that the distribution of $B^{(1)}$ is given by 
\begin{align}\label{eq:F1}
\nonumber \Pr(B^{(1)} > x) &= \Pr(B_1 >x, B_2 >x, \dots, B_m >x) \\
\nonumber & = \Pr(B_1 >x) \Pr( B_2 >x)  \cdots \Pr(B_m >x)\\
& = \Pr(B_1 > x )^m = \bar F(x)^m,
\end{align}
since $B_i, i = 1, \dots, m$, are independent and identically distributed. 
Now, the service time $S_1^{PS}$ is determined by the number of failures this job has experienced, i.e. 
\begin{align*}
\Pr[ N_1 > n] & = \E  \left[ \Pr \left( B^{(1)} > \frac{A}{m}  \right) \right]^n = \E \left (1-\bar G  (m B^{(1)}) \right)^n  , 
\end{align*}
and, using \eqref{eq:F1} and \eqref{eq:ma1}, together with our main assumption, we observe that
\begin{align*}
 \log  \Pr(m B^{(1)} > x ) & = m \log  \bar F \left( \frac{x}{m}  \right)  \\
 &  \sim m^{1-\gamma} \log  \bar F(x) \sim \alpha m^{1-\gamma} \log  \bar G(x).
\end{align*} 
Then, Theorem~6 in \cite{JT2007} applies with $ \alpha/m^{\gamma-1} \leq \alpha $, i.e.
\begin{align*}
\lim_{t \rightarrow \infty} \frac{\log \Pr[S_1^{PS} > t]}{ \log t} = - \frac{\alpha}{m^{\gamma-1}}.
\end{align*}

Next, we derive the \emph{upper} bound. To this end, we consider a system where the server is kept idle after the completion of each job until the next failure occurs. At this time, all the remaining jobs are served under PS until the next shortest one departs. If there are more than one jobs of the same size, only one of these departs. Under this policy, it clearly holds that
\begin{align*}
\Theta_m^{PS} \leq  \sum_{i=1}^m \bar S_i^{PS},
\end{align*}
where $\bar S_i^{PS}$ corresponds to the service time of the $i^{th}$ smallest job and includes the time until the next failure.

Using the union bound, we obtain
\begin{align}\label{eq:UB1}
\Pr[ \Theta^{PS}_m  > t] &\leq \Pr \left[\sum_{i=1}^m \bar S_i^{PS} > t \right] \leq (1+\epsilon) \sum_{i=1}^m \Pr \left(\bar S_i^{PS} > \frac{t}{m} \right).
\end{align}
It is easy to see that the service time of the $i^{th}$ smallest job $B^{(i)}$ depends on the number of jobs that share the server, i.e. $m-i +1$, since $m-i$ jobs have remained in the queue. 
Now, the distribution of the $i^{th}$ shortest job is derived as
\begin{align}\label{eq:Fi}
\nonumber \Pr(B^{(i)} > x) &= \sum_{k=0}^{i-1}  {m \choose k} \Pr(B_1\leq x)^k \Pr(B_1> x)^{m-k} \\
& \sim {m \choose i-1}  \Pr(B_1 > x)^{m-i+1} \sim \bar F(x)^{m-i+1}.
\end{align}

The number of restarts for the $i^{th}$ smallest job, $N_i$, is computed as 
\begin{align*}
\Pr[N_i > n ] &=  \E \left[ \Pr  \left( B^{(i)}> \frac{A}{m-i} \right) \right]^n \\
& =  \E \left(1-\bar G ((m-i+1) B^{(i)} ) \right )^n.  
\end{align*}
Next, starting from \eqref{eq:Fi}, it easily follows that
\begin{align*}
 \log  \Pr \left( (m-i+1) B^{(i)} > x \right) & \sim  \log    \bar F \left( \frac{x}{m-i+1}  \right)^{m-i +1}  \\
 & \sim  (m-i+1)^{1-\gamma}  \log \bar F(x) \\
 &  \sim \alpha  (m-i+1)^{1-\gamma}   \log \bar G(x) , 
\end{align*}
where we use \eqref{eq:ma1} and our main assumption and define $\alpha_i \eqdef \alpha/(m-i+1)^{\gamma-1} $.

Now, recalling Theorem~\ref{thm:1}, we have 
\begin{align*}
\frac{\log \Pr[\bar S_i^{PS} > t]}{\log t} \rightarrow \alpha_i \text{ as $ t\rightarrow \infty$,}
\end{align*}
and thus \eqref{eq:UB1} yields
\begin{equation*}
\frac{\log \Pr[\Theta_m^{PS} > t]}{  \log t} \leq  -(1 - \epsilon) \min_{i = 1 \dots m} \alpha_i,
\end{equation*}
for all  $t\geq t_0$.

\emph{(i) Subexponential or exponential jobs ($\gamma \leq 1$)}. \\
Observe that $\underset {i = 1\dots m } \min \alpha_i = \alpha $, and thus 
\begin{align}\label{eq:UBi}
\frac{\log \Pr[\Theta_m^{PS} > t]}{\log t} & \leq  - (1 - \epsilon)   \alpha.
\end{align}

\emph{(ii) Superexponential jobs ($\gamma >1$)}. \\
In this case, $\underset {i = 1\dots m } \min  \alpha_i = \alpha / m^{\gamma -1}$, and thus 
\begin{align}\label{eq:UBii}
\frac{\log \Pr[\Theta_m^{PS} > t]}{\log t} \leq    -(1-\epsilon) \frac{ \alpha}{ m^{\gamma -1 }}.
\end{align}
Letting $\epsilon \rightarrow 0$ in \eqref{eq:UBi} and \eqref{eq:UBii}, we obtain the upper bound. 
\end{proof}

 \begin{figure}[h]
  \centering
\includegraphics[width = 0.6 \textwidth]{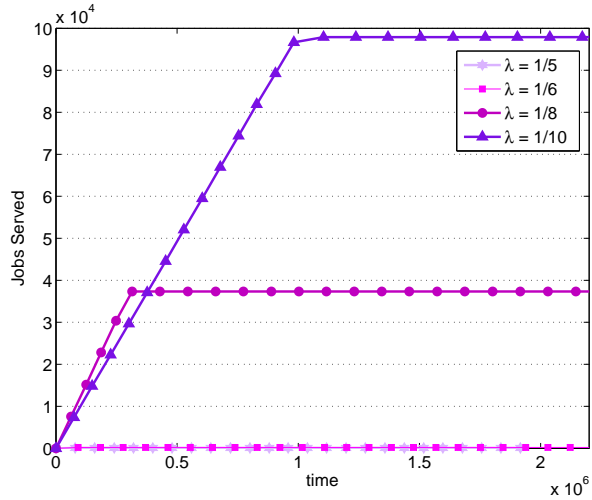}
  \caption{Example 1. Jobs served over time.}
\label{fig:ex1}
\end{figure}

\section{Simulation}\label{s:5}
In this section, we present our simulation experiments in order to demonstrate our theoretical findings. All the experiments result from $N=10^8$ (or more) samples of each simulated scenario; this guarantees the existence of at least 100 occurrences in the lightest end of the tail that is presented in the figures. 
First, we illustrate the instability results from Sections~\ref{s:0} and \ref{s:1}.

\textbf{Example 1.} \textit{M/G/1 PS is unstable.} In this example, we show that the PS queue becomes unstable by simulating the M/G/1 PS queue for different arrival rates $\lambda>0$, which all satisfy the stability condition for the M/G/1 FCFS queue, when jobs are served one at a time. In this regard, we assume constant job sizes $\beta = 1$ and Poisson failures of rate $\mu = 1/20$. Therefore, by evaluating \eqref{eq:stab}, we obtain
\begin{align*}
\lambda \E[S] = \lambda \mu^{-1} (e^{\mu} - 1) =  20(e^{0.05}-1) \lambda = 1.025 \lambda < 1,
\end{align*}
or equivalently the stability region for the FCFS queue is given by $\Lambda = \{ \lambda \leq 0 : \lambda < 0.9752 \}$. Hence, in this example, we use $\lambda$ from the FCFS stability region, $\lambda \in \Lambda$.

In Fig.~\ref{fig:ex1}, we plot the number of jobs that have received service up to time $t$. We observe that the cumulative number of served jobs always converges to a fixed number and does not increase any further. This happens after some critical time when the queue starts to grow continuously until it becomes unable to drain. For larger values of $\lambda$, the system saturates faster meaning that the cumulative throughput at the saturated state is lower.

\begin{figure}[h]
\begin{minipage}[t]{0.5\textwidth}
\centering
  \includegraphics[width =  \textwidth]{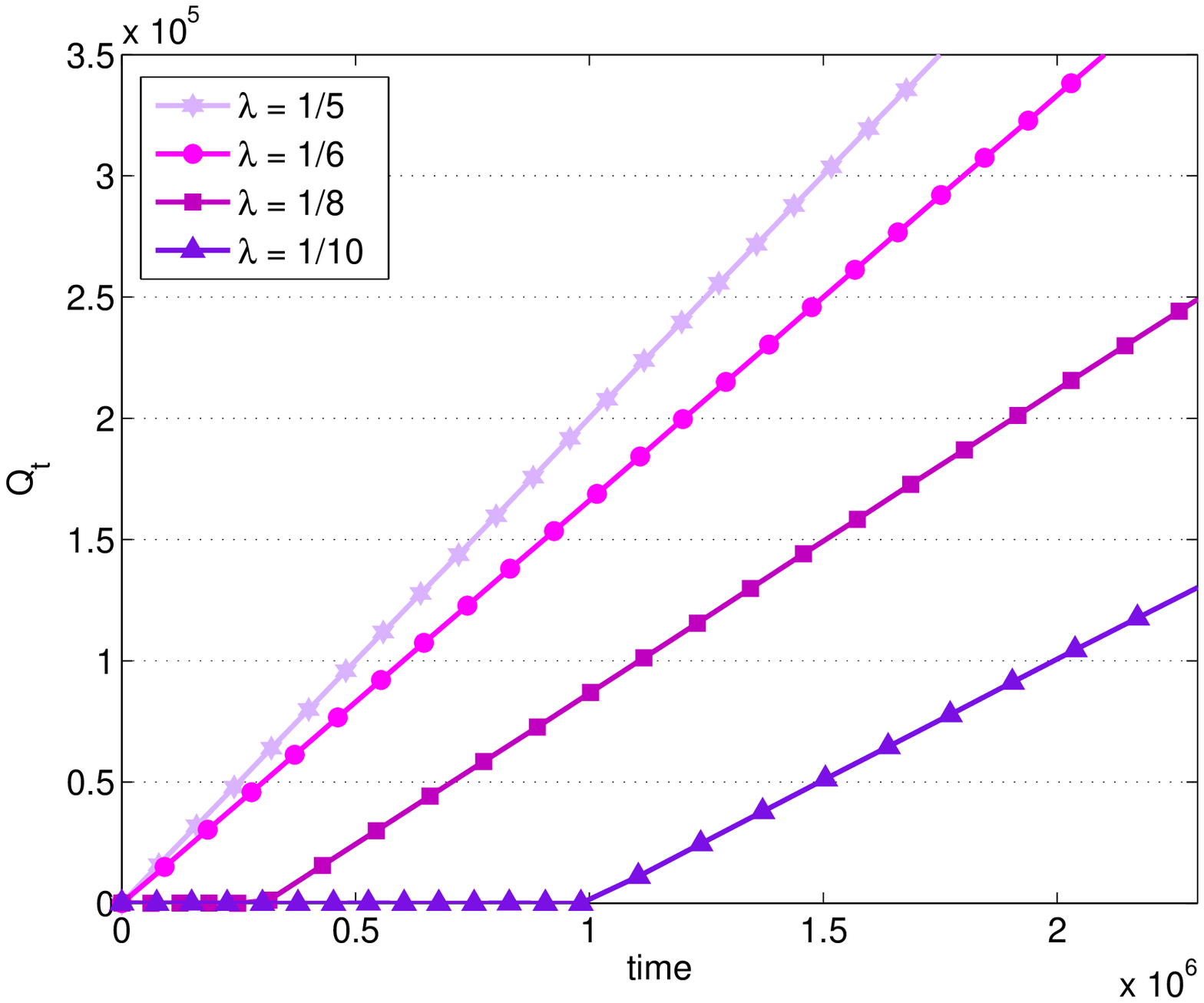}
  (a) Queue size over time
 \end{minipage} 
  \begin{minipage}[t]{0.5\textwidth}
    \centering
\includegraphics[width =  \textwidth]{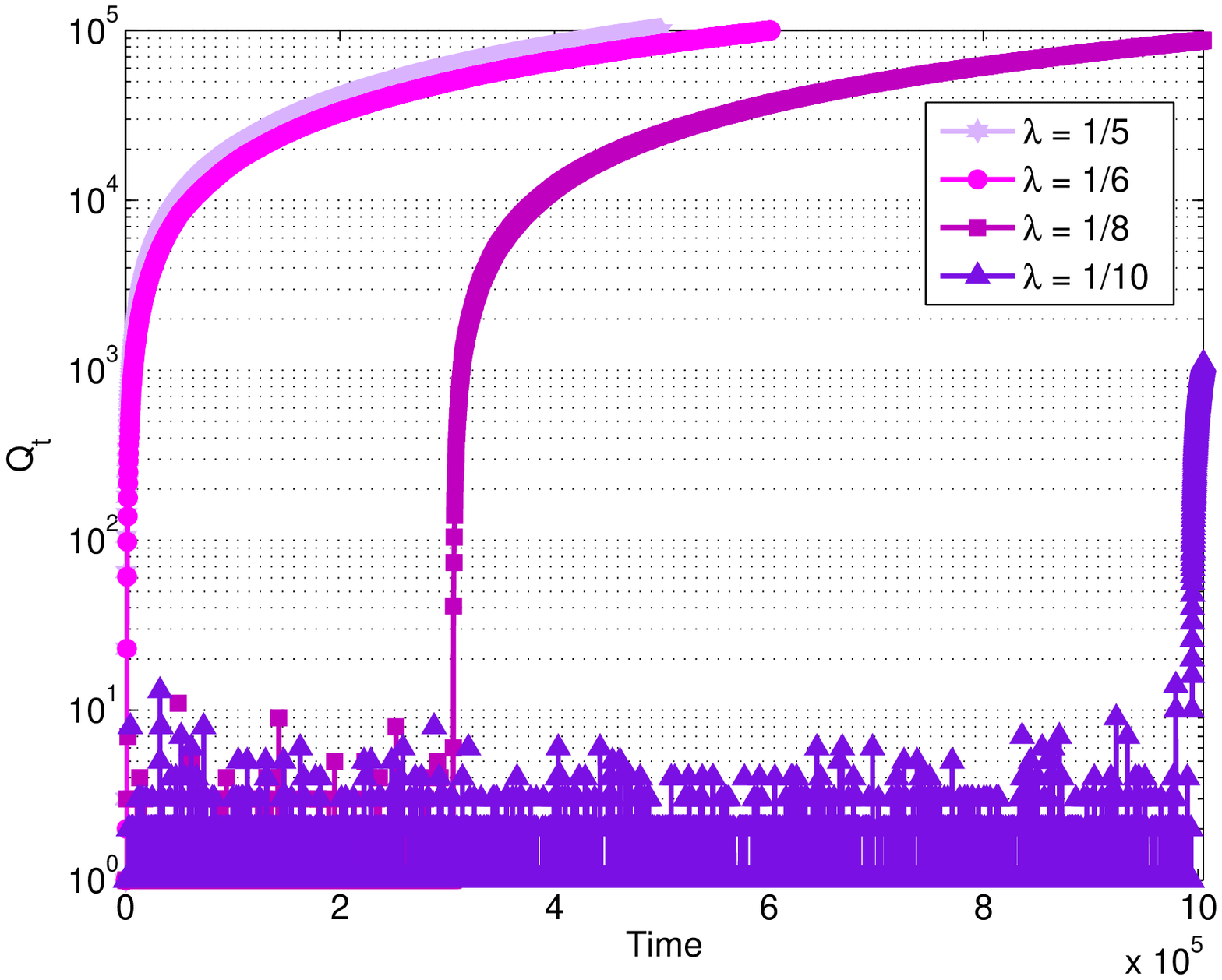}
  \label{fig:ex1c}
  (b) Queue size for small $t$
    \end{minipage}
      \caption{Example 1. Queue size evolution. \small Subfigure (b) zooms in the time range $[0, 10^6]$ of Fig.~\ref{fig:ex1b}; $Q_t$ ($y$-axis) is shown on the logarithmic scale.}
       \label{fig:ex1b}
    \end{figure}

Furthermore, we observe from the simulation that the system behaves as if it were stable until some critical time or queue size after which it is unable to drain. From Fig.~\ref{fig:ex1}, we can see that the case $\lambda = 10^{-1}$ saturates at time $t = 10^6$ and the total number of served jobs reaches $10^5$. Hence, the departure rate until saturation time is $10^5/10^6= 10^{-1}$, which is exactly equal to the arrival rate $\lambda = 10^{-1}$, corresponding to the departure rate of a stable queue. This further emphasizes the importance of studying the stability of these systems since, at first glance, they may appear stable.

 \begin{figure}[h]
  \centering
\includegraphics[width = 0.6 \textwidth]{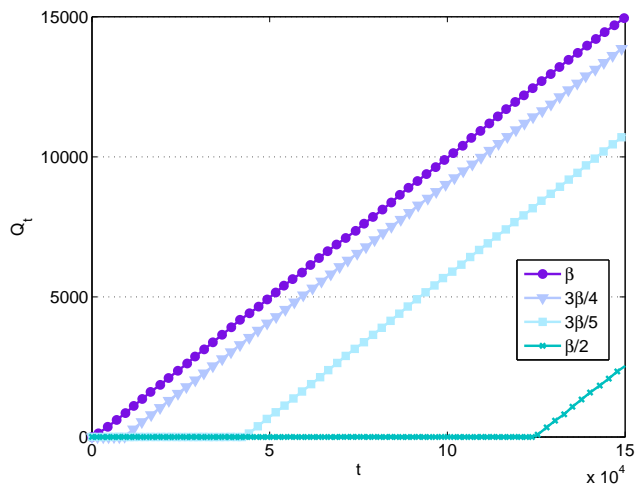}
  \caption{Example 1. Queue size over time parameterized by fragment length; $\beta = 2, \lambda = 0.1$.}
\label{fig:ex1d}
\end{figure}

Fig.~\ref{fig:ex1b} demonstrates the queue size evolution over time. Similarly as in Fig.~\ref{fig:ex1}, we observe that for any arrival rate $\lambda$, there is a critical time after which the queue continues to grow and never empties. This time varies depending on the simulation experiment; yet, on average, we observe that the queue remains stable for longer time when $\lambda$ is smaller. Now, we zoom in on the queue evolution on the logarithmic scale in Fig.~\ref{fig:ex1b}(b). Again, we observe that the queue looks stable until some critical time/queue size.

Last, in Fig.~\ref{fig:ex1d}, we plot the queue evolution for different job sizes, namely $\beta = 1, 1.2, 1.5$ and 2. We observe that larger job fragments cause instability much faster than the smaller units. For example, $\beta = 2$ leads to instability almost immediately, while $ \beta = 1.5$ renders the queue unstable after $10^4$ time units. Similarly, reducing the fragment size by $60$\% delays the process by an additional $ 3\times 10^4$ units. Last, cutting the jobs in half causes instability after approximately $13 \times 10^4$ time units. This implies that one should apply fragmentation with caution in order to select the appropriate fragment size that will maintain good system performance for the longest time.

\begin{figure}[h]
  \centering
\includegraphics[width = 0.6 \textwidth]{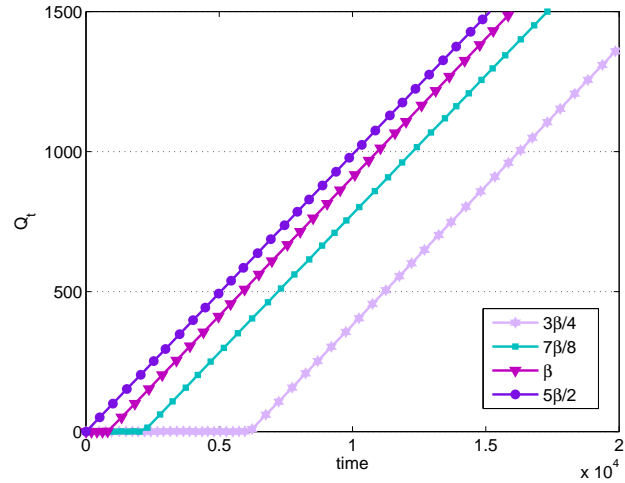}
  \caption{Example 2. Queue size over time parameterized by job size; $\beta =4$.}
\label{fig:ex2}
\end{figure}

\textbf{Example 2.} \textit{General arrivals.} In this example, we consider non Poisson arrivals. We assume that the failure distribution is exponential with mean $\E A = 10$ and that jobs interarrival times follow the Pareto distribution with $\alpha = 2$ and mean $\E \tau = 10.1$. Similarly as in the previous example, Fig.~\ref{fig:ex2} shows the queue evolution with time for different job sizes $\beta$.  

Next, we validate the results on the transient analysis from Section~\ref{s:3}.

 \textbf{Example 3.} \textit{Serving one job at a time/FCFS: Always the same index $\alpha$}. In this example, we consider a queue of $m=10$ jobs, which are served First Come First Serve (FCFS), i.e. one at a time. The logarithmic asymptotics from Theorem~\ref{thm:2} implies that the tail is always a power law of index $\alpha =2$.

 \begin{figure}[h]
\centering
\includegraphics[width = 0.6 \textwidth]{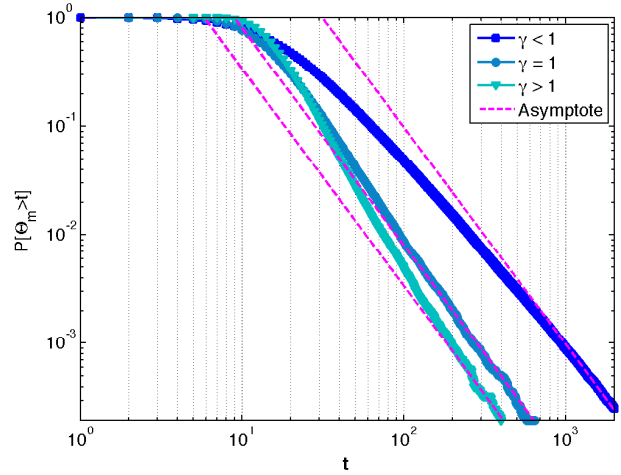}
  \caption{Example 3. FCFS: Logarithmic asymptotics when $\alpha = 2$ for exponential, superexponential ($\gamma > 1$) and subexponential ($\gamma < 1$) distributions.}
\label{fig:ex3}
\end{figure}

In Fig.~\ref{fig:ex3}, we plot the distribution of the total completion time in a queue with 10 jobs that are processed one at a time. On the same graph, we plot the logarithmic asymptotics (dotted lines) that correspond to a power law of index $\alpha =2$. We consider the following three scenarios:
\begin{enumerate} 
\item Weibull distributions with $\gamma = 2$. The failures $A$ are distributed according to $\bar G(x) = e^{-(x/\mu)^2}$ with mean $\E [A] = \mu \Gamma(1.5) =  1.5$, and jobs $B$ also follow Weibull distributions with $\bar F(x) = e^{-(x/\lambda)^2}, \lambda = \mu / \sqrt 2  $. In this case, it is easy to check that the main assumption of Theorem~\ref{thm:2} is satisfied, i.e. $$\log \bar F(x) = - (x/\lambda)^2 = \alpha \log \bar G(x), \alpha  = (\mu/ \lambda)^2. $$ 
\item Exponential distributions. $A$'s are exponential with $\E [A] = 2$, $\bar G(x) = e^{-x/2}$, and the jobs $B$ are also exponential of unit mean, i.e. $\bar F(x) = e^{-x}$. Then, trivially, $$\log \bar F(x) =  2 \log \bar G(x).$$ 
\item Weibull distributions with $\gamma =  0.5$. $A$'s are Weibull with $\bar G(x) = e^{-\sqrt{x}/2}$, i.e $\E [A] = 8$. Also, we assume Weibull jobs $B$ with $\bar F(x) = e^{-\sqrt{x}}$. Thus,
 $$\log \bar F(x) = - \sqrt{x} = 2  \log \bar G(x). $$
\end{enumerate}
In all three cases, we obtain $\alpha  = 2$. Yet, we observe that the tail asymptotics is the same regardless of the distribution of the job sizes. For the subexponential jobs (case 3: Weibull with $\gamma < 1$), the power law tail appears later compared to the case of superexponential jobs. This is because the constant factor of the exact asymptotics is different for each case, and it depends on the mean size of $A$, $\E [A]$.  

\begin{figure}[h]
 \centering
\includegraphics[width =  0.6 \textwidth]{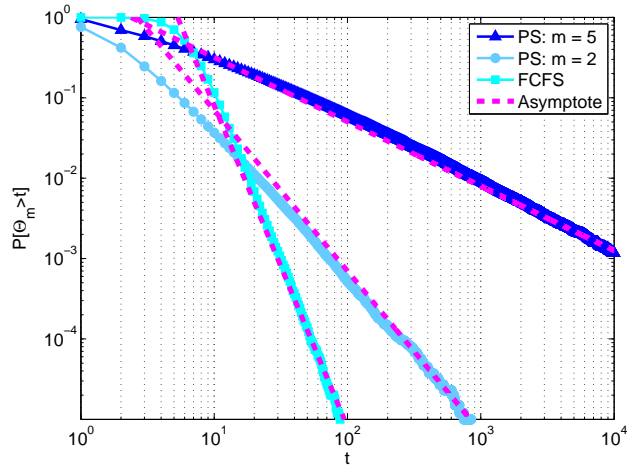}
 \caption{Example 4. Logarithmic asymptotics for different number of superexponential jobs when $\alpha = 4$ under PS and FCFS discipline.}
\label{fig:ex4a}
\end{figure}

\textbf{Example 4.} \textit{PS: The effect of the number of jobs}. In this example, we consider a PS queue with $m=5$ and $m=2$ superexponential jobs, and compare it against a FCFS queue with $m=5$ jobs. We assume superexponential job sizes $B$'s and $A$'s, namely Weibull with $\gamma = 2$; see case 1 of Example~3. Here $\alpha$ is taken equal to 4. The logarithmic asymptotics is given in Theorems~\ref{thm:2} and \ref{thm:3}. 

In Fig.~\ref{fig:ex4a}, we demonstrate the total completion time $\Theta_m^{PS}$, for different number of jobs, when $\gamma = 2$. Theorem~\ref{thm:3}(ii) states that $\alpha(m) = \alpha/m^{\gamma-1} $ and, thus, for $\gamma =2$ we have $\alpha(m) = \alpha / m$, e.g. we expect power law asymptotes with index $ \alpha/ m$ for the different values of $m$. On the same figure, we also plot the FCFS completion time $\Theta_m$, which is always a power law of index $\alpha = 4$, as we previously observed in Example~3. It can be seen that PS generates heavier power laws, for superexponential jobs. In particular, PS with $m=2$ results in power law asymptotics with $\alpha(2)=2$, while PS with $m=5$ jobs leads to system instability since $\alpha(5) = 4/5< 1$. 

 \begin{figure}[h]
  \centering
\includegraphics[width = 0.6 \textwidth]{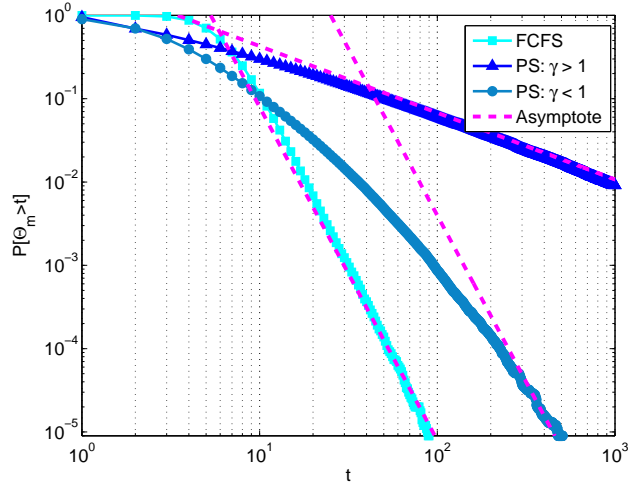}
  \caption{Example 5. Logarithmic asymptotics under FCFS, PS with subexponential and superexponential jobs.}
\label{fig:ex5}
\end{figure}

\textbf{Example 5.} \textit{PS: The effect of the distribution type}. In this example, for completeness, we evaluate the impact of the job distribution on the total completion time under both heavy and light-tailed job sizes. To this end, we consider the PS queue from Example~4, with $m=5$ jobs, and compare it against FCFS. In Fig.~\ref{fig:ex5}, we re-plot the logarithmic asymptotics of the total completion time $\Pr(\Theta_m^{PS}>t)$, for different distribution types of the failures/jobs and index $\alpha = 4$, as before. In particular, we consider Weibull distributions as in Example~3 with $\gamma = 1/2 < 1$ and $\gamma = 2>1$ for the subexponential and superexponential cases, respectively.

On the same graph, we plot the distribution of the completion time $\Theta_m$ in FCFS, which is always a power law of the same index, as illustrated in Example~3. By fixing the number of jobs to be $m=5$, Fig.~\ref{fig:ex5} shows that when the jobs are superexponential, PS yields the heaviest asymptotics among all three scenarios; for subexponential jobs, PS generates asymptotics with the same power law index as in FCFS, albeit with a different constant factor.

 \begin{figure}[h]
  \centering
\includegraphics[width = 0.6 \textwidth]{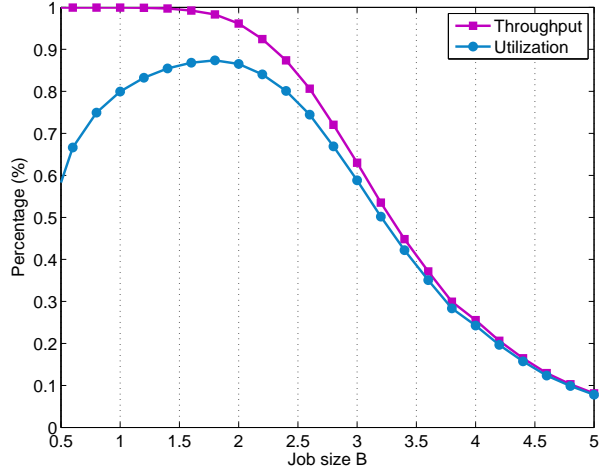}
  \caption{Example 6. Throughput vs. utilization tradeoff.}
\label{fig:ex6}
\end{figure}

\textbf{Example 6.} \textit{Limited queue: Throughput vs. overhead tradeoff}. In practice, job and buffer sizes are bounded and therefore the queue may never become unstable. However, our results indicate that the queue may lock itself in a `nearly unstable' state, where it is at its maximum size and the throughput is very low. Here, we would like to emphasize that, unlike in the case of unlimited queue size, job fragmentation can be useful in increasing the throughput and the efficiency of the system. In this case, one has to be careful about the overhead cost of fragmentation. Basically, each fragment requires additional information, called the `header' in the context of communications, which contains details on how it fits into the bigger job, e.g. destination/routing information in communication networks. Hence, if the fragments are too small, there will be a lot of overhead and waste of resources. In view of this fact, one would like to optimize the fragment sizes by striking a balance between throughput and utilization.

In this example, we demonstrate the tradeoff between throughput and generated overhead, assuming limited queue size $\tilde Q$. If the newly arriving job does not fit in the queue, i.e. the number of jobs currently in the queue is equal to $\tilde Q$, it is discarded. We define throughput as the percentage of the jobs that complete service among all jobs that arrive at the M/G/1 PS queue. It basically corresponds to the total work that is carried out in the system. On the other hand, we define utilization as the useful work that is served over the aggregated load in the system. Specifically, we consider jobs that require a minimum size $\beta$, where $\beta$ represents the overhead, e.g. the packet header, thread id, etc. The remaining job size, $B-\beta$, represents the useful information. 

We consider different job sizes $B$ from 0.4 up to 5 bytes, with overhead $\beta = 0.2$. We simulate the M/G/1 PS queue with maximum queue size $\tilde Q= 10$ jobs for a fixed time $T=10^{8}$ time units. The arrivals are Poisson with rate $1/10$ and the failures are exponential of the same rate. Clearly, in the case of fixed job sizes $B$, throughput $\gamma$ is lower bounded by the throughput of the system when it performs at the limit, i.e. when the queue is full. This state corresponds to the worst overall performance and can be easily computed. On average, for a fixed period of time $T$, $\tilde Q$ jobs will complete service every $\E[S_{\tilde Q}]$ time units, while the total jobs that arrive in the system is $\lambda T$. In this case, the lower bound for the throughput is given by $\min \{ 1, \gamma \}$, where
\begin{equation*}
\underline{\gamma} =  \tilde Q \frac {T }{\E[S_{\tilde Q}]} \frac{1} {\lambda T} = \frac{\tilde Q}{\lambda \E[S_{\tilde Q}]},
\end{equation*} 
and in the particular case of exponential failures, using \eqref{eq:stab} we derive 
\begin{equation}
\underline{\gamma} =  \frac{\tilde Q}{\lambda \mu^{-1}(e^{\mu \tilde Q B} -1 ) }.
\end{equation} 

Using this observation, throughput will be suboptimal when $\gamma < 1$. Thus, for job sizes larger than $B_* =\log ( \mu \tilde Q  \lambda^{-1}+ 1)/ (\mu \tilde Q)$, the throughput starts decreasing.

In Fig.~\ref{fig:ex6}, we observe that for small job sizes, the throughput is 100\% and it deteriorates as the job size $B$ increases. In particular, when the job size exceeds $1.5$, the throughput drops exponentially. Utilization exhibits a different behavior; it is low when the job size is small, i.e. the useful job size is comparable to the overhead $\beta$, and reaches its peak at $B \approx 1.7$. After this, it starts decreasing following similar trend as the throughput. In this case, $B-\beta \approx 1.5$ appears to be the optimal size for the job fragments. This phenomenon of combining limited queue size with job fragmentation may require further investigation.

\section{Concluding Remarks}\label{s:6}
Retransmissions/restarts represent a primary failure recovery mechanism in large-scale engineering systems, as it was argued in the introduction. In communication networks, retransmissions lie at the core of the network architecture, as they appear in all layers of the protocol stack. Similarly, PS/DPS based scheduling mechanisms, due to their inherent fairness, are commonly used in computing and communication systems. Such mechanisms allow for efficient and fair resource allocation, and thus they are preferred in engineering system design.

However, our results show that, under mild conditions, PS/DPS scheduling in systems with retransmissions is always unstable. Furthermore, this instability cannot be resolved by job fragmentation techniques or checkpointing. On the contrary, serving one job at a time, e.g. FCFS, can be stable and its performance can be further enhanced with fragmentation. Interestingly, systems where jobs are served one at a time can highly benefit from fragmentation and, in fact, their performance can approach closely the corresponding system without failures.

Overall, using PS in combination with retransmissions in the presence of failures deteriorates the system performance and induces instability. In addition, our findings suggest that further examination of existing techniques is necessary in the failure-prone environment with retransmission/restart failure recovery and sharing, e.g. see Example~6.

\section*{Appendix}
\begin{proof}[of \eqref{eq:FCFS-ub} in Theorem~\ref{thm:2}]
We formally prove that $\Theta_m \leq  \bar{\Theta}_m = \sum_{i = 1}^m \bar S_i  $. 
The proof follows by induction. 

$\mathit{n = 1}.$ Let $\tau_{\Theta_1}$ denote the time between $\Theta_1$ until the first failure occurs after $\Theta_1$, i.e. the time after the departure of the first job (see also Definition~\ref{def:tau}). Then the total service time for jobs $B_1$ and $B_2$ is
\begin{align*}
\Theta_2 = \left \{ \begin{array}{l l}
\Theta_1 + \tau_{\Theta_1}+ S_2, &  \text{if $B_2>  \tau_{\Theta_1}$} \\
\Theta_1 + B_2 ,&  \text{otherwise.}
\end{array} \right.
\end{align*}
If we idle the server after the successful completion of the first job until the next failure, the total completion time will be equal to $\bar{\Theta}_2 = \Theta_1 +  \tau_{\Theta_1} + S_2$, since we discard the remaining interval $\tau_{\Theta_1}$ and start service at $\Theta_1 + \tau_{\Theta_1}$. Therefore,  $\Theta_2 \leq  \bar{\Theta} _2$ (see Figure~\ref{fig:appendix} for an illustration).

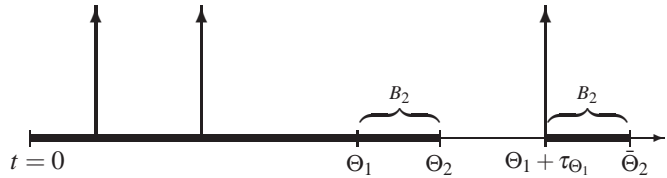
\begin{figure}[h] 
\centering
\begin{picture}(300,85)
\put(0,25){\vector(1,0){240}}
\put(0,22){\line(0,1){6}}
\put(124,22){\line(0,1){6}}
\put(195,22){\line(0,1){6}}
\put(227,22){\line(0,1){6}}
\put(155,22){\line(0,1){6}}
\thicklines
 \put(0,25){\linethickness{0.85mm}\line(1,0){125}}
\put(3,13){\makebox(0,0)[b]{$t = 0$}}
\put(65,25){\vector(0,1){50}}
\put(25,25){\vector(0,1){50}}
\put(140,30){\makebox(0,0)[b]{$ \overbrace{\qquad \quad}^{B_2}$}}
\put(125,11){\makebox(0,0)[b]{\small $ \Theta_1$}}
\put(195,25){\vector(0,1){50}}
 \put(125,25){\linethickness{0.85mm}\line(1,0){30}}
\put(155,11){\makebox(0,0)[b]{\small $\Theta_2$}}
\put(196,11){\makebox(0,0)[b]{\small $\Theta_1 + \tau_{\Theta_1}$}}
 \put(195,25){\linethickness{0.85mm}\line(1,0){32}}
 \put(211,30){\makebox(0,0)[b]{$ \overbrace{\qquad \quad}^{B_2}$}}
\put(229,11){\makebox(0,0)[b]{\small $ \bar \Theta_2$}}
\end{picture}
\caption{Completion time in a failure-prone system: \small{Assume that there are two jobs $B_1$ and $B_2$ and the first succeeds at time $\Theta_1$. In the original system, job $B_2$ starts service immediately and completes at time $\Theta_2$, before the next failure occurs after $\tau_{\Theta_1}$ time units. In the alternate system, $B_2$ will only start its service at time $\Theta_1 + \tau_{\Theta_1}$. If $B_2 < \tau_{\Theta_2}$, then $\bar \Theta_2=\Theta_1 + \tau_{\Theta_1}+ B_2$.}}
\label{fig:appendix}
\end{figure}

\textit{Induction step.} Assume $\Theta_n \leq \bar \Theta_n$ for $n < m$. If $\Theta_n $ is the time when the $n^{th}$ job is completed, then $\tau_{\Theta_n}$ is the time until the next failure after $\Theta_n$. Now, for the following job $B_{n+1}$, we have  
\begin{align*}
\Theta_{n+1} = \left \{ \begin{array}{l l}
\Theta_n +  \tau_{\Theta_n} + S_{n+1}, &  \text{if $B_{n+1} >  \tau_{\Theta_n}$} \\
\Theta_n + B_{n+1}, &  \text{otherwise.} \\
\end{array} \right.
\end{align*}
If $\Theta_n   = \bar \Theta_n $, then clearly $\bar \Theta_{n+1}  = \Theta_n + \tau_{\Theta_n} +  S_{n+1} \geq \Theta_{n+1}  $. 

If $\bar \Theta_n > \Theta_n   $ then, by construction, job $B_{n+1}$ can start its service after time $ \Theta_n +  \tau_{\Theta_n} $, i.e the time that the first failure occurs after $\Theta_n$. This implies that $\bar \Theta_n \geq \Theta_n +  \tau_{\Theta_n}  $. Now, if $B_{n+1}$ finishes before the failure occurs, then clearly $\bar \Theta_{n+1} \geq  \Theta_{n+1} $. If not, it will either succeed during the period $(\Theta_n +  \tau_{\Theta_n}, \bar \Theta_{n})$ implying that $  \Theta_{n+1} \leq  \bar \Theta_{n+1} $, or it will synchronize with the other system and $\Theta_{n+1}  =  \bar \Theta_n +  \tau_{\bar \Theta_n} + S_{n+1} \leq \bar \Theta_{n+1}  $. 
\end{proof}

\small


\end{document}